\newtheorem{theorem}{Theorem}
\newtheorem{lemma}{Lemma}
\newtheorem{assumption}{Assumption}
\let\oldproofname=\proofname
\renewcommand{\proofname}{\rm\bf{\oldproofname}}
\title{Two-step Estimation of Network Formation Models with
Unobserved Heterogeneities and Strategic Interactions}
\author{Shaomin Wu}
\begin{document}

\abovedisplayskip=15pt
\belowdisplayskip=15pt

\maketitle
\centerline{\textbf{Abstract}}
\indent In this paper, I characterize the network formation process as a static game of incomplete information, where the latent payoff of forming a link between two individuals depends on the structure of the network, as well as private information on agents' attributes. I allow agents' private unobserved attributes to be correlated with observed attributes through individual fixed effects. Using data from a single large network, I propose a two-step estimator for the model primitives. In the first step, I estimate agents' equilibrium beliefs of other people's choice probabilities. In the second step, I plug in the first-step estimator to the conditional choice probability expression and estimate the model parameters and the unobserved individual fixed effects together using Joint MLE. Assuming that the observed attributes are discrete, I showed that the first step estimator is uniformly consistent with rate $N^{-1/4}$, where $N$ is the total number of linking proposals. I also show that the second-step estimator converges asymptotically to a normal distribution at the same rate.

\section{Introduction}
The social network is an important feature to take into account when studying many economic behaviors, from peer effects in education and crime to the dynamics of product adoption and financial contagions. However, most network studies of these behaviors are challenged by the endogeneity of the network. This highlights the importance of developing econometric models of network formation. Moreover, the network formation process is itself an interesting subject to study, since it sheds light on real-world behaviors such as how people engage with each other on social media platforms. \\
\indent Two features are crucial in a network formation model. The first feature is strategic interactions. The incentives of forming a link in a network are not only affected by the two agents' characteristics but also the linking decisions of other agents, such as the "popularity effect"-- an agent is more likely to form a link with another agent who has many friends. The second feature involves unobserved agent-level heterogeneities, which are typically private information that is known only to the agent themselves, such as an individual's personality traits on a dating app. The agent-level unobserved heterogeneities are correlated with observed characteristics but are unobserved to other agents or researchers. Depicting these two features is essential for effectively modeling the network formation process and accurately inferring agents' preferences. Motivated by this, I study a directed network formation model with individual-specific unobserved heterogeneities and strategic interactions. In the incremental utility of a link from person $i$ to $j$, I include the linking choices of the person $j$ to capture the popularity effect, and include individual fixed effects to capture agent-level unobserved heterogeneities, while remaining agnostic about the conditional distribution of the agent-level unobservables, not requiring it to be known to the researchers.\\ 
\indent There's growing literature on the estimation of network formation models. Among them, this paper is most related to \cite{leung2015two} and \cite{ridder2022estimation}. Both of them study the estimation of network formation games with incomplete information and strategic interactions and assume that the private information is independent of observed characteristics. \cite{leung2015two} lets the payoff depend on network structure in a separable way, through the sum of incremental utilities from each link. Then the optimal link choices are myopic, in the sense that an agent chooses to form a link with another member if the expected utility of forming that link is greater than 0. To be specific, let $G_{ij}$ denote the linking proposal from individual $i$ to $j$, and let $X_i, X_j$ denote observed characteristics of the two individuals; let $\epsilon_{ij}$ denote unobserved link-specific characteristics that are independent with $X$. \cite{leung2015two}'s model yields the following optimal linking decision:
\begin{align}
    G_{ij}=\mathbf{1}\left\{w(X_i,X_j)\beta_0+\mathbb{E}[G_{-ij}|X,\sigma]+\varepsilon_{ij}\geqslant 0\right\},\label{leung linking decision}
\end{align}
where $w$ is a known function capturing the homophily effect, and $\sigma$ denotes the equilibrium.
\cite{ridder2022estimation} considers a more general case in which the utility function depends on the choice of potential partners in a non-separable way, for example, allowing the utility to depend on links-in-common. Using the Legendre transform, they show that even under this general case, the optimal linking choice is still equivalent to a sequence of myopic link choices. For estimation, both of the two papers assume that the data observed comes from a symmetric equilibrium, whereby agents with the same observable characteristics have the same equilibrium linking probabilities, i.e. $P(G_{ij}|X_{ij}=x, X)=P(G_{kl}|X_{kl}=x, X)$. Then the conditional linking probabilities can be estimated in the first step, by taking the empirical frequency with which agents with the same observable characteristics link to each other. In terms of strategic interactions, this paper adopts the same framework as \cite{leung2015two}, including only the popularity effect and keeping the dependence on the network structure to be separable, which is simpler than \cite{ridder2022estimation}'s framework. Different from the two papers, this paper studies the case when private information is correlated with observables by including individual fixed effects in the utility. For estimation, this paper also adopts a two-step procedure and estimates the realized equilibrium beliefs in the first step. This allows us to circumvent the difficulty to specify the equilibrium selection mechanism when there might be multiple equilibria.\\
\indent This paper is also closely related to \cite{graham2017econometric}, which studies a network formation model with dyadic link formation. In their model, the linking decision between individual $i$ and $j$ only depends on the characteristics of $i$ and $j$ and there are no strategic interactions. Let $A_i, A_j$ denote individual fixed effects unobserved to researchers. The linking decision in \cite{graham2017econometric} is
\begin{align}
    G_{ij}=\mathbf{1}\left\{w(X_i,X_j)\beta_0+A_i+A_j+\varepsilon_{ij}\geqslant0\right\}\label{graham linking decision}
\end{align}
Same as \cite{graham2017econometric}, this paper also incorporates unobserved individual fixed effects. The difference is that my model contains strategic interactions, so the information structure matters. I assume that individual fixed effects $A_i$ are private information that is i.i.d. across individuals. The agents know the distribution of the individual fixed effects so that they can form beliefs of the expected "type"\footnote{I don't assume $A_i$ to have discrete distribution, though.} of other people. From the modeling point of view, this paper studies a model which is a combination of (\ref{leung linking decision}) and (\ref{graham linking decision}). Note that a special case of (\ref{leung linking decision}) is when $\epsilon_{ij}$ can be written as the sum of an individual "random effect" $A_i$ and an idiosyncratic error $\nu_{ij}$. This is different from this paper's setting since $A_i$ is assumed to be independent with $X$ in \cite{leung2015two}.\\
\indent Another strand of literature on estimating strategic network formation models assumes complete information, such as \cite{miyauchi2016structural} and  \cite{sheng2020structural}. These models are the hardest to deal with because they generally admit multiple equilibria and thus achieve set but not point identification of the model parameters. This paper shies away from these cases by assuming incomplete information.\\
\indent The rest of the paper is organized as follows. In section 2, I develop the model and derive the optimal link choices. In section 3, I propose a two-step estimation procedure and show  the consistency of the first-step estimator. In section 4, I showed the asymptotic distribution of the estimators. The last section concludes.
\section{The Model}
I consider the directed network formation model in this paper. The formation process is a static game of incomplete information. An agent's payoff of forming a link depends on idiosyncratic private information. Given the belief of other people's linking decisions, agents form their own links simultaneously. Formally, the network formation game is set up as follows:\\
\indent There are $n$ agents indexed by $i\in \mathcal I=\{1,2,...,n\}$. Each agent chooses whether or not to link with the other $n-1$ agents. Player $i$'s action vector $G_i=(G_{i1}, G_{i2},...,G_{ij},..., G_{in})'$ where $j\neq i$ is chosen from the action profile ${A}$ which has $2^{n-1}$ components. 
The payoff function of individual $i$ is 
\begin{align}
U_i(G, X, A_i, \varepsilon_i)=\sum _{j=1}^nG_{ij}\left(u_{ij}(G_{-i},X,A_i;\beta)+\varepsilon_{ij}\right).\label{payoff}
\end{align}
The deterministic part of incremental utility from link $ij$ is specified as 
\begin{align}
    u_{ij}(G_{-i},X,A_i;\beta)=w(X_i,X_j)\beta_1+A_i+G_{ji}\beta_2+\frac{1}{n-2}\sum_{k\neq i,j} G_{jk}\beta_3.
\end{align}
where the first term captures the homophily effect. $w$ is a known function. $X=(X_1',..., X_n')'$ is public information for all agents and is observable to researchers. For simplicity, write $w(X_i,X_j)=W_{ij}$ from now on. The second term $A_i$ is individual-specific heterogeneity, which is unobserved both to other agents and researchers. Let $F_{A|X}$ be the distribution of $A_i$ conditional on observables, which is assumed to be independent and identical across $i$, and known to all agents, but not necessarily known to researchers. $A_i$ can be correlated with $X$. The third and the last term capture the popularity effect. The realization of $\varepsilon_i=(\varepsilon_{i1},...,\varepsilon_{in})'$ is agent $i$'s private information which is also unobserved to researchers. The model is therefore a static game with incomplete information, and the solution concept is Bayesian Nash Equilibrium. Different from \cite{leung2015two}, my model allows private information to be correlated with common information while doesn't require the conditional distribution of private information to be known to researchers. Also, I allow "asymmetric" equilibrium which will be mentioned later in this part.\\
\indent For the above model, I impose the following assumptions:
\begin{assumption} \label{independence}
    (a) $X_i\perp X_j$ for $i\neq j$. $X_i$ is discrete distributed with finite support $\mathbb{X}=\{x_1,...,x_{T_x}\}$. (b) $A_i$ are independently and identically distributed. The conditional CDF $F_{A|X}$ is known to all agents but unknown to researchers. (c) $\varepsilon_{ij}$ are i.i.d. with logit distribution $F_{\varepsilon}$, which is known to both agents and researchers. (d) $\varepsilon_i\perp\ (X', A)'$ for all $i$. 
\end{assumption}
\indent Let $\delta_j(X,A_j,\varepsilon_j)$ denote agent $j$'s (pure) strategy. Let $\sigma_j(a|X, A_j)= Pr\left(\delta_j(X,A_j,\varepsilon_j)=a|X,A_j\right)$ denote the agent $i$'s belief that agent $j$ of type $A_j$ chooses action $a$, given commonly known information $X$ and agent $i$'s private information. By Assumption \ref{independence} (b) and (c), actions $G_i$ and $G_j$, $i\neq j$ are independent given commonly known attributes $X$. This fact simplifies the proof of consistency by weakening the correlation between links. Since agent $i$ actually doesn't known the realization of $A_j$, so agent $i$'s expected utility from choosing action $g_i\in S$ is $\sum_{g_{-i}}
U_i(g_i,g_{-i}, X, A_i, \varepsilon_i)\mathbb E_{A_{-j}}\big[\sigma_{-i}(g_{-i}|X, A_{-j})\big]$. Therefore,
\begin{align*}
Pr&(G_i=g_i|X,A_i,\sigma)\\
&=Pr\Bigg(\sum_{g_{-i}}
\bigg[U_i(g_i,g_{-i}, X, A_i, \varepsilon_i)-U_i(\tilde g_i,g_{-i}, X, A_i, \varepsilon_i)\bigg]\mathbb E_{A_{-j}}\big[\sigma_{-i}(g_{-i}|X, A_{-j})\big]>0,\\
&\qquad\qquad \forall \tilde g_i\in S\Bigg| X, A_i,\sigma\Bigg)
\end{align*}
A (Bayesian) equilibrium $\sigma^*(X, A_i)$ is a belief function that solves the fixed point equation:
\begin{align*}
    \sigma_i^*(a|X, A_i)=Pr(G_i=a|X,A_i,\sigma^*)
\end{align*}
for all $X\in \mathbf{X}$, agents $i\in \mathcal I$ and actions $a\in S$.\\
\indent I consider "symmetric" equilibria in which pairs of agents with the same observable attributes and the same type ($A_i$) have the same conditional linking probabilities. For any $(X,A_i,\varepsilon_{ij})$ and "symmetric" belief profile $\sigma$ in a neighborhood of an "symmetric" equilibrium $\sigma^*$ , player $i$'s optimal strategy $G_{i}(X, A_i, \varepsilon_i, \sigma)=\big(G_{ij}(X, A_i, \varepsilon_i, \sigma)\big)_{j\neq i}$ is given by:
\begin{align}
    G_{ij}(X, A_i, \varepsilon_i, \sigma)&=\mathbf{1}\Big\{\mathbb E\big[u_{ij}(G_{-i}, X, A_i;\beta)\big|X,A_i, \sigma\big]+\varepsilon_{ij}\geqslant0\Big\}\label{optimal strategy}.
\end{align}
\indent Assuming a symmetric equilibrium exists, the model is incomplete because there could be multiple equilibria for any realization of $(X, A, \varepsilon)$. For completeness of the model, I specify the equilibrium selection mechanism in the following assumption. The mechanism, however, is not explicitly used in writing the likelihood function in part 3, because by using two-step estimation, I can avoid specifying the equilibrium theoretically. For the convenience of defining equilibrium selection mechanisms, I add subscript $n$ to $G$, $X$, $A$, and $\varepsilon$. The equilibrium selection mechanism is a measurable function $\lambda_n: (X_n, \nu_n,\beta_0) \mapsto \sigma_n\in \mathcal{G}(X_n, A_n, \beta_0)$, where $\mathcal{G}(X_n, A_n, \beta_0)$ is the set of symmetric equilibria.
\begin{assumption}\label{Equilibrium Selection}
    \textbf{(Equilibrium Selection)} There exist sequences of equilibrium selection mechanisms $\{\lambda_n(\cdot); n\in\mathbb N\}$ and public signals $\{\nu_n;n\in \mathbb N\}$ such that for $n$ sufficiently large, $\mathcal{G}(X_n, \beta_0)$ is nonempty, and for any $g_n\in S^n$, 
    \begin{align*}
        Pr(G_n=\mathbf g_n|X_n, A_n)=\sum_{\sigma_n\in \mathcal{G}(X_n, A_n, \beta_0)=\sigma_n|X_n, A_n} Pr(\lambda(X_n, \nu_n;\beta_0)=\sigma_n|X_n, A_n)\prod_{i=1}^n\sigma_{i}(g_{i}|X_n, A_{i})
    \end{align*}    
\end{assumption}
\section{Estimation}
\indent  Define $P_{ij}(X,A_{i},\sigma)$ to be the probability that individual $i$ proposes to form a link with $j$ conditional on $X$ $A_{i}$, and $\sigma$. According to (\ref{optimal strategy}) and Assumption \ref{independence} (c),
\begin{align}
    &P_{ij}(X, A_{i,n} ,\sigma)=Pr(G_{ ij}(X,A_{i}, \varepsilon_{i},\sigma)=1|X, A_{i}, \sigma)\notag\\
    =&\frac{\exp\Big(W_{ij}\beta_0+A_{i}+\mathbb E_{A_j}\big[\sigma_{ji}(G_{ji}=1\big|X,A_j)\big]\beta_1+\frac{1}{n-2}\sum_{k\neq i,j}\mathbb E_{A_j}\big[\sigma_{jk}(G_{jk}=1\big|X,A_j)\big]\beta_2\Big)}{1+\exp\Big(W_{ij}\beta_0+A_{i}+\mathbb E_{A_j}\big[\sigma_{ji}(G_{ji}=1\big|X,A_j)\big]\beta_1+\frac{1}{n-2}\sum_{k\neq i,j}\mathbb E_{A_j}\big[\sigma_{jk}(G_{jk}=1\big|X,A_j)\big]\beta_2\Big)}\label{ccp}
\end{align}
Define $p_{ij}(X, A_{i})$ to be the equilibrium probability that agent $i$ proposes a link to agent $j$. which is realized in the data. Equilibrium condition requires that
\begin{align}
    &p_{ij}(X, A_{i})=P_{ij}\left(X,  A_{i}, p_{}(X, A_{i})\right)\notag\\
    =&\frac{\exp\Big(W_{ij}\beta_0+A_{i}+\mathbb E_{A_{j}|X}\big[p_{ji}(X,A_{j})\big]\beta_1+\frac{1}{n-2}\sum_{k\neq i,j}\mathbb E_{A_{j}|X}\big[p_{jk}(X,A_{j})\big]\beta_2\Big)}{1+\exp\Big(W_{ij}\beta_0+A_{i}+\mathbb E_{A_{j}|X}\big[p_{ji}(X,A_{j})\big]\beta_1+\frac{1}{n-2}\sum_{k\neq i,j}\mathbb E_{A_{j}|X}\big[p_{jk}(X,A_{j})\big]\beta_2\Big)}\label{equilibrium condition}
\end{align}
\indent For notation simplicity, denote $q_{jk}(X, \sigma^*):=\mathbb E_{A_{j}}\big[Pr(G_{ jk}(X,A_{j}, \varepsilon_{j},\sigma)=1|X, A_{j}, \sigma^*)\big]$, which is the probability that agent $j$ proposes a link to $k$ conditional on $X$ and the realized equilibrium $\sigma^*$. Then (\ref{equilibrium condition}) can be rewritten as 
\begin{align}
    &P_{ij}\left(X,  A_{i}, p(X, A_{i})\right)\notag\\
    =&\frac{\exp\Big(W_{ij}\beta_0+A_{i}+q_{ji}(X, \sigma^*)\beta_1+\frac{1}{n-2}\sum_{k\neq i,j}q_{jk}(X, \sigma^*)\beta_2\Big)}{1+\exp\Big(W_{ij}\beta_0+A_{i}+q_{ji}(X, \sigma^*)\beta_1+\frac{1}{n-2}\sum_{k\neq i,j}q_{jk}(X, \sigma^*)\beta_2\Big)}\notag\\
    :=& Q_{ij}\left(X,  A_{i}, q_{}(X, \sigma^*)\right)\label{new notation ccp}
\end{align}
\indent Although $p_{ jk}(X, A_{j})$ is not identified from data, $q_{ij}(X)$ is identified. With abuse of notations, let $q_{ st}(X)=\mathbb E_{A_{j}}\big[p_{ jk}(X_{j}=x_s, X_{k}=x_t, X,A_{j})\big]$.
\\
\indent Consider the empirical frequency of pairs with the same observable characteristics proposing to form a link:
\begin{align*}
    \hat q_{n,st}=\frac{\sum_i\sum_{j\neq i}G_{ij}\mathbf{1}\big\{X_{i}=x_s,X_{j}=x_t\big\}}{\sum_i\sum_{j\neq i}\mathbf{1}\big\{X_{i}=x_s,X_{j}=x_t\big\}}
\end{align*}

First, I want to show that $q_{st}(X, \sigma^*)$ can be consistently estimated by $\hat q_{n, st}$ under the payoff function specified in \ref{payoff}. Formally, I want to prove the following lemma:
\begin{lemma}\label{qhat}
    For any $X$ and realized symmetric equilibrium $\sigma^*$,
    \begin{align*}
        \sup_{s,t}\big|\hat q_{n, st}-q_{st}(X, \sigma^*)\big|=O_p\big(n^{-1/2}\big).
    \end{align*}
\end{lemma}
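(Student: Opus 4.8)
The plan is to exploit the finiteness of $\mathbb{X}$ to reduce the uniform statement to a pointwise one, and then to control each deviation $\hat q_{n,st}-q_{st}(X,\sigma^*)$ by organizing the double sum \emph{by rows}, which are conditionally independent given $X$ and $\sigma^*$. Since $\mathbb{X}$ is finite, $\sup_{s,t}$ is a maximum over the $T_x^2$ pairs $(s,t)$, and a maximum of finitely many $O_p(n^{-1/2})$ terms is again $O_p(n^{-1/2})$; so it suffices to bound a single $|\hat q_{n,st}-q_{st}|$ and finish with a union bound. Writing $D_{st}=\sum_i\sum_{j\neq i}\mathbf 1\{X_i=x_s,X_j=x_t\}$ for the denominator (non-random given $X$), and noting that by the definition of $q_{st}$ together with the symmetric-equilibrium restriction one has $\mathbb E[G_{ij}\mid X,\sigma^*]=q_{st}(X,\sigma^*)$ for every ordered pair with $X_i=x_s,\,X_j=x_t$, I would write the deviation as a normalized sum of row contributions:
\begin{align*}
\hat q_{n,st}-q_{st}(X,\sigma^*)=\frac{1}{D_{st}}\sum_{i:\,X_i=x_s}R_i,\qquad R_i:=\sum_{j\neq i:\,X_j=x_t}\big(G_{ij}-q_{st}(X,\sigma^*)\big).
\end{align*}

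The two structural facts I would use are: (i) conditional on $X$ and $\sigma^*$, the rows $G_i$ are independent across $i$ (the independence of $G_i,G_j$ given $X$ noted in the text, from Assumption~\ref{independence}(b)--(d)), hence the $R_i$ are independent; and (ii) conditional \emph{additionally} on $A_i$, the links $\{G_{ij}\}_j$ within row $i$ are i.i.d.\ Bernoulli with a common success probability $P_{ij}(X,A_i,\sigma^*)$ that is identical across receivers of type $x_t$ by symmetry, since the only randomness left in the row is the i.i.d.\ $\varepsilon_{ij}$. Abbreviating that common probability by $P_i(A_i)$ and letting $m_{it}$ be the number of type-$x_t$ receivers available to $i$, fact (ii) gives $\mathbb E[R_i\mid A_i]=m_{it}\big(P_i(A_i)-q_{st}\big)$ and $\mathrm{Var}(R_i\mid A_i)=m_{it}P_i(A_i)\big(1-P_i(A_i)\big)$. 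Because $\mathbb E_{A_i}[P_i(A_i)]=q_{st}$, the law of total variance yields $\mathbb E[R_i]=0$ and
\begin{align*}
\mathrm{Var}(R_i)=\mathbb E_{A_i}\!\big[m_{it}P_i(1-P_i)\big]+m_{it}^2\,\mathrm{Var}_{A_i}\!\big(P_i(A_i)\big)\leq \tfrac14 m_{it}+\tfrac14 m_{it}^2=O(n_t^2).
\end{align*}

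Combining (i) with this per-row bound, $\mathrm{Var}\big(\sum_{i:X_i=x_s}R_i\big)=\sum_{i:X_i=x_s}\mathrm{Var}(R_i)=O(n_s n_t^2)$, and since $D_{st}\asymp n_s n_t$ I obtain $\mathrm{Var}(\hat q_{n,st}-q_{st})=O(n_s n_t^2)/(n_s n_t)^2=O(1/n_s)$. Under Assumption~\ref{independence}(a) each support point carries positive mass, so the bucket counts satisfy $n_s\asymp n$, giving $\mathrm{Var}(\hat q_{n,st}-q_{st})=O(1/n)$; Chebyshev's inequality then delivers $|\hat q_{n,st}-q_{st}|=O_p(n^{-1/2})$ for each $(s,t)$, and a union bound over the $T_x^2$ pairs completes the proof.

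The hard part, and the reason a naive argument treating all $G_{ij}$ as independent Bernoullis would give the wrong variance, is the within-row dependence induced by the shared unobserved type $A_i$: the $A_i$-variation term contributes the $O(n_t^2)$ piece of $\mathrm{Var}(R_i)$, which dominates the idiosyncratic $O(n_t)$ piece. The crux is therefore to recognize that this dependence is confined \emph{within} rows, to keep each row as the independent sampling unit, and to verify through the law of total variance that even with this inflated per-row variance, dividing by $D_{st}^2\asymp n_s^2 n_t^2$ against only $n_s\asymp n$ independent summands still produces the $n^{-1/2}$ rate.
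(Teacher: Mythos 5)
Your proof is correct and is essentially the paper's argument: both control the second moment of $\hat q_{n,st}-q_{st}$ conditional on $(X,\sigma^*)$ via Chebyshev plus a union bound over the finitely many $(s,t)$, exploiting that links from different senders are conditionally independent so only within-row dependence survives, and that the denominator count is of order $n^2$ while the within-row covariance terms number $O(n^3)$, yielding a conditional variance of order $n^{-1}$. Your law-of-total-variance computation of $\mathrm{Var}(R_i)$, with the $A_i$-induced term $m_{it}^2\,\mathrm{Var}_{A_i}(P_i)$ dominating, is just a more explicit form of the paper's Cauchy--Schwarz bound $\mathrm{Cov}(G_{ij},G_{ik}\mid X,\sigma^*)\leqslant 1/4$ on those same within-row pairs, so the two proofs coincide in substance.
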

\begin{proof}
    See the Appendix.
\end{proof}
\indent For the convenience of the following analysis, I introduce a change of notation:
\begin{align*}
    Z_{ij}:=(W_{ij}', q_{ji}, \frac{1}{n-2}\sum_{k\neq i,j} q_{jk})'
\end{align*}
and 
\begin{align*}
    \hat Z_{ij}:=(W_{ij}', \hat q_{ji},  \frac{1}{n-2}\sum_{k\neq i,j} \hat q_{jk})'
\end{align*}
then by Lemma \ref{qhat}, $\sup_{s,t}|\hat Z_{s,t}- Z_{s,t}|=O_p\big(n^{-1/2}\big)$\\
\indent With the estimates $\hat q_n=\{\hat q_{st}\}_{\forall s,t}$, I propose to estimate the parameter $\beta$ and individual fixed effects $\{A_i\}_{i=1}^n$ jointly by MLE.\\
\indent By Assumption \ref{independence} (c), the conditional likelihood of the network is 
\begin{align*}
    P(G=\mathbf g|X, A)=\prod_{i\neq j} Pr(G_{ ij}(X,A_{i}, \varepsilon_{i},\sigma)=g|X, A_{i}, \sigma)
\end{align*}
\indent By (\ref{ccp}) and (\ref{new notation ccp}),
\begin{align*}
    &Pr(G_{ ij}(X,A_{i}, \varepsilon_{i},\sigma)=g|X, A_{i}, \sigma)\\
    =&Q_{ij}(X,A,q(X))^{g}\big[1-Q_{ij}(X,A,q(X))\big]^{1-g}
\end{align*}

Construct the log-likelihood function:
\begin{align}
    \mathcal{L}_n(\beta,A,q)=\frac{1}{n(n-1)}\sum_i\sum_{j\neq i}G_{ij}\ln Q_{ij}(\beta, A_{i}, q)+(1-G_{ij})\ln (1-Q_{ij}(\beta, A_{i}, q))\label{loglikelihood}
\end{align}
Let $\hat \beta$ and $\hat A$ be the maximizer of the log-likelihood with $q$ replaced by $\hat q_n$.
\begin{align*}
    \max_{\beta, A} \mathcal{L}_n(\beta, A, \hat q_n).
\end{align*}
By first concentrating out $A$, the estimators are given by:
\begin{align}
    \hat \beta = \arg\max_{\beta} \mathcal{L}^c_n(\beta, \hat A(\beta), \hat q_n)\label{conlikelihood}
\end{align}
where 
\begin{align*}
    &\hat A(\beta)=\arg\max_{A} \mathcal{L}_n(\beta, A, \hat q_n)\\
    \Longrightarrow&\hat A_i(\beta)=\arg\max_{A_i}\frac{1}{n-1}\sum_{j\neq i}G_{ij}\ln Q_{ij}(\beta, A_{i}, \hat q_n)+(1-G_{ij})\ln (1-Q_{ij}(\beta, A_{i}, \hat q_n))
\end{align*}
\indent By rearranging the sample score of (\ref{loglikelihood}), it can be shown that $\hat A(\beta)$, when it exists, is the unique solution to the fixed point problem:
\begin{align}
    \hat A(\beta)=\varphi(\hat A(\beta))
\end{align}
where 
\begin{align}
    \varphi(A)=\begin{pmatrix}\ln \sum_{j\neq 1}G_{1j}-\ln\sum_{j\neq 1}\frac{\exp(\hat Z_{1j}'\beta)}{1+\exp(\hat Z_{1j}'\beta+A_1)}\\\vdots\\\ln \sum_{j\neq n}G_{nj}-\ln\sum_{j\neq n}\frac{\exp(\hat Z_{nj}'\beta)}{1+\exp(\hat Z_{nj}'\beta+A_n)}\end{pmatrix}\label{fixedpointsolution}
\end{align}
\section{Asymptotic Analysis}
\indent In this part, I first show the consistency of $\hat\beta$ and $\hat A$. Because link proposals from the same individual are correlated, the first step estimator has a slow convergence rate $\sqrt{n}$, which is equivalent to the usual convergence rate of $N^{1/4}$, since the number of summands in the likelihood function $N=n(n-1)$.  As is well discussed in the nonlinear panel literature, there could be an estimation bias of $\hat\beta$ caused by the incidental parameters problem (e.g. \cite{hahn2004jackknife}, \cite{arellano2007understanding}). However, as I will show in this part, the effect of second-step bias is dominated by the slow convergence rate of the first step, so a bias term won't show up in the asymptotic distribution.
\begin{assumption}\label{Compact Support}
    \textbf{(Compact Support)} $\beta_0\in \text{int} (\mathbb B)$, with $\mathbb B$ a compact subset of $\mathbb R^K$.
\end{assumption}
\begin{assumption}\label{Joint FE Identification}
    \textbf{(Joint FE Identification)} $\mathbb E[\mathcal L_n(\beta,A,q)|X,A_0]$ is uniquely maximized at $\beta=\beta_0$ and , $A=A_0$, for large enough n.
\end{assumption}
\indent Compactness of the support (Assumption \ref{independence} (a)(b) and Assumption \ref{Compact Support}) implies that 
\begin{align}
    Q_{ij}(\beta, A_i, q) \in (\kappa, 1-\kappa)\label{bdd}
\end{align}
for some $0<\kappa<1$ and for all $A_i\in\mathbb A$, $\beta\in \mathbb B$ and $\forall q\in (k, 1-k)$.
\begin{theorem}\label{Consistency}
    \textbf{(Consistency)} Under Assumptions \ref{independence}, \ref{Equilibrium Selection}, \ref{Compact Support}, and \ref{Joint FE Identification}
    \begin{align*}
        &\hat\beta\xrightarrow{p}\beta_0;\\
        &\hat A\xrightarrow{p} A_0.
    \end{align*}
\end{theorem}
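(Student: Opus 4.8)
The plan is to treat this as a standard consistency argument for an extremum estimator, adapted to two nonstandard features: the first-step plug-in $\hat q_n$ in place of the true beliefs $q$, and the incidental-parameters structure, since the dimension of $A$ grows with $n$. Throughout I would exploit two structural facts. First, because $\varepsilon_{ij}$ are i.i.d. logit, each dyadic contribution equals $G_{ij}(Z_{ij}'\beta + A_i) - \ln\bigl(1+\exp(Z_{ij}'\beta+A_i)\bigr)$, which is concave in the linear index; hence $\mathcal{L}_n$ is jointly concave in $(\beta, A)$, and the concentrated criterion inherits concavity in $\beta$. Second, by Assumption \ref{independence}(b)--(c) the links $\{G_{ij}\}_{j\neq i}$ are independent conditional on $X$ and $A$. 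Both facts sharply simplify the uniform-convergence and argmax steps.

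I would first show that replacing $q$ by $\hat q_n$ is asymptotically negligible, uniformly in the parameters. Since $Q_{ij}\in(\kappa,1-\kappa)$ by (\ref{bdd}), the maps $u\mapsto\ln Q$ and $u\mapsto\ln(1-Q)$ are Lipschitz in the index $u=Z_{ij}'\beta+A_i$ with constant bounded by $1/\kappa$, and the index is Lipschitz in $Z_{ij}$ over the compact set $\mathbb B$. Combining with $\sup_{s,t}|\hat Z_{st}-Z_{st}|=O_p(n^{-1/2})$ from Lemma \ref{qhat} gives $\sup_{\beta\in\mathbb B,\,A\in\mathbb A^n}\bigl|\mathcal{L}_n(\beta,A,\hat q_n)-\mathcal{L}_n(\beta,A,q)\bigr|=O_p(n^{-1/2})=o_p(1)$. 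It therefore suffices to analyze the infeasible objective $\mathcal{L}_n(\beta,A,q)$ built with the true beliefs.

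Next I would dispose of the incidental parameters by working with the concentrated objective. For fixed $\beta$ define the population target $\bar A_i(\beta)$ as the unique root of $\frac{1}{n-1}\sum_{j\neq i}\bigl[Q_{ij}(\beta_0,A_{0i},q)-Q_{ij}(\beta,\bar A_i(\beta),q)\bigr]=0$, noting $\bar A_i(\beta_0)=A_{0i}$; existence and uniqueness follow from strict monotonicity of $Q_{ij}$ in $A_i$ together with compactness of $\mathbb A$. Because $\{G_{ij}\}_{j\neq i}$ are conditionally independent and bounded, a Hoeffding bound controls $\bigl|\frac{1}{n-1}\sum_{j\neq i}(G_{ij}-Q_{ij}(\beta_0,A_{0i},q))\bigr|$ at rate $n^{-1/2}$ for each $i$, and a union bound over the $n$ individuals (combined with a finite cover of $\mathbb B$ and Lipschitz continuity of the index in $\beta$) yields $\max_i\sup_{\beta\in\mathbb B}\bigl|\hat A_i(\beta)-\bar A_i(\beta)\bigr|=O_p\bigl(\sqrt{\log n/n}\bigr)=o_p(1)$, where concavity of the per-$i$ objective in $A_i$ converts the score bound into a bound on the argmax. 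This is the step I expect to be the main obstacle: making the control of $\hat A(\beta)$ uniform simultaneously over the growing index set $\{1,\dots,n\}$ and over $\beta$, and ruling out divergence of the fixed point (\ref{fixedpointsolution}), which is precisely where the uniform bound (\ref{bdd}) and compactness of $\mathbb A$ are indispensable.

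Finally, plugging $\hat A(\beta)$ back in gives the concentrated criterion $\mathcal{L}^c_n(\beta)$, which remains concave in $\beta$. Using the two preceding reductions together with a uniform law of large numbers for the double sum (justified by conditional independence across dyads), I would show that $\mathcal{L}^c_n(\beta)$ converges in probability, uniformly on $\mathbb B$, to a nonstochastic limit whose unique maximizer is $\beta_0$ by Assumption \ref{Joint FE Identification}; since pointwise convergence of concave functions already forces uniform convergence on the compact set $\mathbb B$, the standard argmax theorem delivers $\hat\beta\xrightarrow{p}\beta_0$. Writing $\hat A_i=\hat A_i(\hat\beta)$ and combining $\hat\beta\xrightarrow{p}\beta_0$ with continuity of $\bar A_i(\cdot)$ at $\beta_0$ (where $\bar A_i(\beta_0)=A_{0i}$) and the uniform bound above then gives $\max_i|\hat A_i-A_{0i}|\xrightarrow{p}0$, which completes the argument.
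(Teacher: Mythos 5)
Your argument is correct in substance but follows a genuinely different route from the paper. The paper's proof never concentrates out $A$: it runs the textbook extremum-estimator argument directly on the joint criterion, splitting $\sup_{\beta,A}\big|\mathcal L_n(\beta,A,\hat q)-\mathbb E[\mathcal L_n(\beta,A,q)|X,A_0]\big|$ by the triangle inequality into a plug-in term (dispatched by Lemma \ref{qhat} and the continuous mapping theorem) and a sampling-error term, which---because $\mathcal L_n$ is linear in $G_{ij}$---reduces to $\sup_{\beta,A}\big|\frac{1}{n(n-1)}\sum_i\sum_{j\neq i}(G_{ij}-Q_{ij})\ln\big(\tfrac{Q_{ij}(\beta,A_i,\hat q)}{1-Q_{ij}(\beta,A_i,\hat q)}\big)\big|$ and is controlled by Hoeffding's inequality plus Boole's inequality over $i$; consistency of both $\hat\beta$ and $\hat A$ is then read off Assumption \ref{Joint FE Identification}, with the uniform rate for $\hat A$ deferred to Theorem \ref{uniform A}, proved separately via the fixed-point representation (\ref{fixedpointsolution}) and Lemma 4 of \cite{graham2017econometric}. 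You instead profile out the fixed effects first, defining the pseudo-true $\bar A_i(\beta)$ through the score equation (note your trick makes the centered score at $\bar A_i(\beta)$ equal $\frac{1}{n-1}\sum_{j\neq i}(G_{ij}-Q_{ij}(\beta_0,A_{0i},q))$, free of $\beta$, so the cover of $\mathbb B$ is almost superfluous), convert score bounds to argmax bounds via the curvature floor $\kappa(1-\kappa)$ from (\ref{bdd}), and then use concavity of the concentrated criterion with the convexity lemma and argmax theorem. This buys two real things: your per-$i$ treatment cleanly handles the growing-dimensional space $\mathbb A^n$, where the paper's passage from a pointwise-in-$(\beta,A)$ Hoeffding bound to a supremum over $(\beta,A)$ is asserted rather than argued; and concavity substitutes for the well-separated-maximum condition that the paper's argmax step implicitly needs under Assumption \ref{Joint FE Identification} stated only ``for large enough $n$.'' As a by-product you essentially pre-prove Theorem \ref{uniform A} inside the consistency proof. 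The paper's route is shorter and leans more heavily on the identification assumption. Two shared loose ends you inherit: the ``nonstochastic limit'' of the concentrated criterion is really an $n$-dependent conditional expectation (conditional on $X$, $A_0$, and the realized equilibrium), so your argmax step should be phrased against this drifting objective just as the paper's is; and you should note that when the root defining $\bar A_i(\beta)$ falls outside the compact set $\mathbb A$ for $\beta$ far from $\beta_0$, monotonicity of the score lets you take the boundary projection without affecting the argument.
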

\begin{proof}
    See the Appendix.
\end{proof}
\indent With a more involved argument, I can actually show the uniform convergence rate of $\hat A$
\begin{theorem} \label{uniform A}
     With probability $1-O(n^{-2})$,
    \begin{align*}
         \sup_{1\leqslant i\leqslant n}|\hat A_i-A_{i0}|< O\left(\sqrt{\frac{\ln n}{n}}\right).
    \end{align*}
\end{theorem}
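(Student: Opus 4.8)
The plan is to reduce the uniform control of $\hat A$ to a uniform concentration bound for $n$ conditionally independent Bernoulli sums, one per agent. First I would use the score characterization behind \eqref{fixedpointsolution}: because $Q_{ij}$ is logistic in $A_i$, the first-order condition defining $\hat A_i(\beta)$ collapses to the scalar equation $s_i(\hat A_i)=0$, where $s_i(a):=\frac{1}{n-1}\sum_{j\neq i}\big[G_{ij}-Q_{ij}(\beta,a,\hat q_n)\big]$. Differentiating, $\partial s_i(a)/\partial a=-\frac{1}{n-1}\sum_{j\neq i}Q_{ij}(1-Q_{ij})$, which by the boundedness in \eqref{bdd} satisfies $\partial s_i/\partial a\le-\kappa(1-\kappa)<0$ uniformly in $a\in\mathbb A$, $i$, and $\beta$. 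Hence $s_i$ is strictly decreasing with slope bounded away from zero, so $\hat A_i$ is unique (and exists on the high-probability event that each agent has strictly between $0$ and $n-1$ proposed links). Relying on Theorem \ref{Consistency} to localize $\hat A_i$ inside the compact set $\mathbb A$, the mean-value intermediate point also lies in $\mathbb A$, and a mean-value expansion of $s_i$ around $A_{i0}$ yields
\begin{align*}
|\hat A_i-A_{i0}|\le\frac{|s_i(A_{i0})|}{\kappa(1-\kappa)}.
\end{align*}
It therefore suffices to show $\sup_i|s_i(A_{i0})|=O\big(\sqrt{\ln n/n}\big)$ with probability $1-O(n^{-2})$.

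Second, I would decompose the score at the truth into an oracle term and a plug-in term,
\begin{align*}
s_i(A_{i0})=\frac{1}{n-1}\sum_{j\neq i}\big[G_{ij}-Q_{ij}^0\big]+\frac{1}{n-1}\sum_{j\neq i}\big[Q_{ij}^0-Q_{ij}(\beta,A_{i0},\hat q_n)\big],
\end{align*}
where $Q_{ij}^0:=Q_{ij}(\beta_0,A_{i0},q)=\mathbb E[G_{ij}\mid X,A_i]$ by the equilibrium condition. Isolating $Q_{ij}^0$ at the true $q$ is essential, because it keeps the leading term a sum of terms independent across $j$. The plug-in term is deterministic given $(X,A,\hat q_n,\beta)$; since the logistic link has index derivative bounded by $1/4$ and the index difference is $(\hat Z_{ij}-Z_{ij})'\beta+Z_{ij}'(\beta-\beta_0)$ with $Z_{ij}$ bounded, this term is at most a constant multiple of $\sup_{s,t}|\hat q_{n,st}-q_{st}|+|\beta-\beta_0|=O_p(n^{-1/2})+O_p(|\beta-\beta_0|)$ by Lemma \ref{qhat} and the stated corollary $\sup_{s,t}|\hat Z_{s,t}-Z_{s,t}|=O_p(n^{-1/2})$.

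Third, for the oracle term I would apply an exponential tail bound. By Assumption \ref{independence}(b)--(d), conditional on $(X,A,\sigma^*)$ the variables $\{G_{ij}\}_{j\neq i}$ are independent Bernoulli with means $Q_{ij}^0$, so $\{G_{ij}-Q_{ij}^0\}_{j\neq i}$ are independent, mean-zero, and bounded in $[-1,1]$. Hoeffding's inequality gives $\Pr\big(|\frac{1}{n-1}\sum_{j\neq i}(G_{ij}-Q_{ij}^0)|>t\big)\le 2\exp(-2(n-1)t^2)$; choosing $t=c\sqrt{\ln n/n}$ with $c$ large enough makes each tail $O(n^{-3})$, and a union bound over the $n$ agents yields $\sup_i|\frac{1}{n-1}\sum_{j\neq i}(G_{ij}-Q_{ij}^0)|\le c\sqrt{\ln n/n}$ with probability $1-O(n^{-2})$. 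Combining this with the mean-value bound and the plug-in estimate delivers the claim, the $O_p(n^{-1/2})$ plug-in contribution being dominated by $\sqrt{\ln n/n}$.

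The main obstacle is the uniform-over-$i$ concentration at the sharp $\sqrt{\ln n/n}$ rate while keeping the failure probability at $O(n^{-2})$: the $\ln n$ factor is exactly what the union bound over $n$ incidental parameters costs, and obtaining it hinges on the conditional-independence structure within each agent's link row (Assumption \ref{independence}) so that a sub-Gaussian tail is available. A secondary subtlety is that $\hat q_n$ is a functional of the entire network, so the decomposition must strip it out before invoking independence; and because Theorem \ref{Consistency} supplies only consistency of $\hat\beta$, one should either carry the argument uniformly over a shrinking neighborhood of $\beta_0$ of radius $\asymp\sqrt{\ln n/n}$ (centering on the population-profiled $A_i(\beta)$, which is Lipschitz in $\beta$ with $A_i(\beta_0)=A_{i0}$) or appeal to the faster $N^{1/4}$ rate for $\hat\beta$ established in the subsequent asymptotic analysis, either of which renders the $|\hat\beta-\beta_0|$ contribution negligible relative to $\sqrt{\ln n/n}$.
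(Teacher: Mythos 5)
Your proof is correct in substance but takes a genuinely different route from the paper's. The paper works with the fixed-point characterization \eqref{fixedpointsolution}: it shows that a single application of the map $\varphi$ to the truth $A_0$ moves it by at most $O\big(\sqrt{\ln n/n}\big)$ uniformly in $i$ --- using a Taylor expansion in $\beta$ plus exactly the Hoeffding-and-union-bound concentration you derive, packaged there as Lemma \ref{G-Q} --- and then transfers this one-step bound to the fixed point $\hat A$ by invoking the contraction property of $\varphi$ (Lemma 4 of \cite{graham2017econometric}). You instead exploit strict concavity of each agent's profile score: since $\partial s_i/\partial a\leqslant-\kappa(1-\kappa)$ on $\mathbb A$ by \eqref{bdd}, a mean-value step converts $\sup_i|s_i(A_{i0})|$ directly into a bound on $\sup_i|\hat A_i-A_{i0}|$, and your oracle/plug-in decomposition correctly strips out the network-wide functional $\hat q_n$ before invoking the conditional independence across $j$ within row $i$ (the same structure the paper's Lemma \ref{G-Q} rests on). Your route is more elementary and self-contained --- it dispenses with the external contraction lemma entirely --- at the modest cost of needing $\hat A_i$ interior (or a one-sided argument at the boundary of $\mathbb A$) and the mean-value point inside $\mathbb A$, both of which you address. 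On the one genuine subtlety, the $|\hat\beta-\beta_0|$ contribution, you are no worse off than the paper: its own proof controls the $\beta$-derivative term only as $o_p(1)$ via Theorem \ref{Consistency} while still asserting the $O\big(\sqrt{\ln n/n}\big)$ rate. But note that your second suggested remedy --- importing the faster rate for $\hat\beta$ from the later asymptotic analysis --- would be circular, since the proof of Theorem \ref{asymptotic normality} (through Lemma \ref{Ahat-A} and its Step 2) itself relies on Theorem \ref{uniform A}; your first remedy, running the argument for the profiled $\hat A_i(\beta)$ uniformly over a neighborhood of $\beta_0$ and using Lipschitz dependence of $A_i(\beta)$ on $\beta$, is the non-circular fix and is the one to keep.
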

\begin{proof}
    See the Appendix.
\end{proof}
To state the form of the asymptotic distribution, define
\begin{align}
    \mathcal I_0&=\lim_{n\rightarrow \infty}-\frac{1}{n(n-1)}\sum_{i=1}^n\sum_{j\neq i}Z_{ij}Z_{ij}'Q_{ij}(1-Q_{ij})\notag\\
   &+\frac{1}{n}\sum_{i=1}^n\frac{\left(\frac{1}{n-1}\sum\limits_{j\neq i}Q_{ij}(1-Q_{ij})Z_{ij}\right)\left(\frac{1}{n-1}\sum\limits_{j\neq i}Q_{ij}(1-Q_{ij})Z_{ij}'\right)}{\frac{1}{n-1}\sum_{j\neq i}Q_{ij}(1-Q_{ij})}\label{i0},
\end{align}
\begin{theorem}\label{asymptotic normality}
    Under Assumptions \ref{independence}, \ref{Equilibrium Selection}, \ref{Compact Support}, and \ref{Joint FE Identification}, 
    \begin{align*}
        \frac{\sqrt{n}a'(\hat \beta-\beta_0)}{\|a\|^{-1/2}(a'\mathcal{I}_0^{-1}\Omega_n\mathcal{I}_0^{-1}a)^{1/2}}\xrightarrow{d}N(0,1)
    \end{align*}
for any $d\times 1$ vector of real constants $a$ and $\Omega_n$ as defined in the Appendix.
\end{theorem}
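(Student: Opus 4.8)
The plan is to read the asymptotic distribution off the joint first-order conditions of the profile MLE, profiling out the incidental parameters $A$ and showing that the resulting influence function is governed by the first-step error $\hat q_n - q$. First I would write the two score equations that $(\hat\beta,\hat A(\hat\beta))$ solve, namely $\partial_\beta \mathcal L_n(\hat\beta,\hat A(\hat\beta),\hat q_n)=0$ and $\partial_A \mathcal L_n(\hat\beta,\hat A(\hat\beta),\hat q_n)=0$, and take a joint mean-value expansion around $(\beta_0,A_0,q)$. Theorem \ref{Consistency} localizes the expansion, and (\ref{bdd}) keeps every logistic derivative bounded away from the boundary so the expansion is legitimate. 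Solving the $A$-block for $\hat A - A_0$ and substituting into the $\beta$-block produces the Schur complement $\mathcal L_{\beta\beta}-\mathcal L_{\beta A}\mathcal L_{AA}^{-1}\mathcal L_{A\beta}$ as the effective curvature. Because $\mathcal L_{AA}$ is diagonal (each $A_i$ enters only agent $i$'s links) with entries bounded away from zero by (\ref{bdd}), this Schur complement is exactly the within-agent weighted projection of $Z_{ij}$, and a direct computation shows it converges to $\mathcal I_0$ in (\ref{i0}).

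The core of the argument is the decomposition of the profiled score at the truth into (i) the ``pure'' MLE score $\frac{1}{n(n-1)}\sum_{i,j}(G_{ij}-Q_{ij})Z_{ij}$, together with its projection counterpart coming from the $A$-block, and (ii) the first-step correction $[\mathcal L_{\beta q}-\mathcal L_{\beta A}\mathcal L_{AA}^{-1}\mathcal L_{Aq}](\hat q_n - q)$. Conditional on $(X,A)$ all links are independent by Assumption \ref{independence}(c), so the pure score has conditional variance of order $O(n^{-2})$ and both terms in (i) are $O_p(n^{-1})$; multiplied by $\sqrt n$ they are $o_p(1)$. By contrast, the Jacobian of the score with respect to the finitely many belief components $q_{st}$ is $O_p(1)$, and Lemma \ref{qhat} gives $\hat q_n - q = O_p(n^{-1/2})$, so the correction term is $O_p(n^{-1/2})$ and dominates after scaling by $\sqrt n$. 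Hence $\sqrt n(\hat\beta-\beta_0)=-\mathcal I_0^{-1}\sqrt n\,[\mathcal L_{\beta q}-\mathcal L_{\beta A}\mathcal L_{AA}^{-1}\mathcal L_{Aq}](\hat q_n - q)+o_p(1)$, and $\Omega_n$ is precisely the variance of this scaled first-step-driven score, so the limiting variance is the sandwich $\mathcal I_0^{-1}\Omega_n\mathcal I_0^{-1}$.

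It remains to confirm that the incidental-parameters bias does not contaminate the limit. Substituting $\hat A - A_0$ generates second-order terms quadratic in $\hat A - A_0$; by Theorem \ref{uniform A}, $\sup_i|\hat A_i-A_{i0}|=O_p(\sqrt{\ln n/n})$, and aggregating the $n$ such terms yields a bias of order $O(n^{-1})$ in $\hat\beta$. Scaled by $\sqrt n$ this is $O(n^{-1/2})\to 0$, which is exactly why, as anticipated in the text, the slow first-step rate dominates and no bias term survives. For the limiting law itself I would establish a central limit theorem for $\sqrt n(\hat q_n - q)$: since $\hat q_{n,st}$ averages $G_{ij}$ over the $O(n^2)$ pairs with $X_i=x_s$, $X_j=x_t$, and these links are independent across source agents $i$ conditional on $(X,A)$, grouping by $i$ gives $\sim n$ asymptotically independent clusters, so a clustered Lindeberg CLT delivers joint asymptotic normality of the finite vector $\sqrt n(\hat q_n - q)$.

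Combining this CLT with the linear representation via Slutsky's theorem and the Cramér–Wold device then gives asymptotic normality of $\sqrt n\, a'(\hat\beta-\beta_0)$ with the stated sandwich variance for every fixed $a$, and hence the scalar $N(0,1)$ limit after the indicated normalization. The main obstacle, I expect, is not the leading-order algebra but the uniform control of the remainder: one must verify that all the cross terms — products of $(\hat\beta-\beta_0)$, $(\hat A-A_0)$ and $(\hat q_n - q)$, and the cubic Taylor remainder in the $n$-dimensional $A$-block — are $o_p(n^{-1/2})$, which is where the uniform rate of Theorem \ref{uniform A} and the boundedness (\ref{bdd}) have to be deployed most carefully.
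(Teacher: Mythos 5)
Your proposal is correct and follows essentially the same route as the paper's proof: concentrate out $A$ so the Schur-complement Hessian converges to $\mathcal{I}_0$, expand the profiled score at $(\beta_0, A_0, q)$, show that the pure score and all incidental-parameter second-order terms vanish at the $\sqrt{n}$ scale using Theorem \ref{uniform A}, Lemmas \ref{G-Q} and \ref{Ahat-A}, and Condition (\ref{bdd}), so that the first-step error $\hat q_n - q$ of rate $O_p(n^{-1/2})$ from Lemma \ref{qhat} drives the limit with sandwich variance $\mathcal{I}_0^{-1}\Omega_n\mathcal{I}_0^{-1}$. The only tactical difference is at the final step---you would apply a clustered Lindeberg CLT to the finite vector $\sqrt{n}(\hat q_n - q)$ and then map it linearly, whereas the paper replaces $\hat Z_{ij}$ by the raw links $\zeta_{ij}$ to form conditionally independent cluster sums $V_i$ and applies the Lindeberg--Feller CLT to $\frac{1}{\sqrt{n}}\sum_i a'V_i$ directly---but these are equivalent computations.
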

\begin{proof}
    See the Appendix.
\end{proof}
\section{Monte Carlo Simulation}
In this section, I implement the proposed method in a simulation study. Assume the following utility specification:
\begin{align*}
     U_i(G, X, A_i, \varepsilon_i)=\sum _{j=1}^nG_{ij}\left(|X_i-X_j|\beta_1+A_i+G_{ji}\beta_2+\frac{1}{n-2}\sum_{k\neq i,j} G_{jk}\beta_3+\varepsilon_{ij}\right)
\end{align*}
where $X_i$ is a random variable taking values in $\{1,-1\}$ with equal probability, and $\epsilon_{ij}$ follows the Logistic distribution. The distribution of $A_i$ is generated according to
\begin{align*}
    A_i = (\alpha_L+\gamma a_i)\mathbf{1}\{X_i=-1\}+\alpha_H\mathbf{1}\{X_i=1\}+V_i,
\end{align*}
with $\alpha_L< \alpha_H$ and $a_i\sim N(0, 0.1), V_i\sim N(0, \sqrt{0.1})$, and they are independent. In the simulation exercise, I consider three scenarios. In the first two scenarios, $A_i$ is correlated with $X$. In Scenario 1, I let $\alpha_L=-2/3, \alpha_H=-1/6$, and $\gamma=0$, so that the correlation between $A_i$ and $X_i$ is only through the value of $X_i$. In Scenario 2, I let $\alpha_L=-2/3, \alpha_H=-1/6$, and $\gamma=1$, so that the correlation between $A_i$ and $X_i$ is determined not only by the value of $X_i$ but also by the identity of $i$ (captured by the random variable $a_i$). In Scenario 3, I let $\alpha_L=-1/2, \alpha_H=-1/2$ and $\gamma=0$, so that $A_i$ is independent with $X$. The true values of the parameters are $(\beta_1, \beta_2, \beta_3)=(-2,1,1)$. The network is generated according to the $n-$player incomplete information game described in Section 2, with $n$ taking values of $50, 100, 250$, and $500$. For each value of $n$, I generate a single network and use the method proposed in this paper and \cite{leung2015two} to estimate the parameters. When using \cite{leung2015two}'s estimator, the private information $\eta_{ij}$ is the sum of $A_i$ and $\epsilon_{ij}$ with $A_i\perp\epsilon_{ij}$. Each experiment is repeated $1000$ times. I report the means and standard errors of the estimated parameters in the tables below.
\begin{table}[htbp]
  \centering
  \caption{Scenario 1 Correlated Private Information ($\alpha_L=-2/3, \alpha_H=-1/6,\gamma = 0$)}
    \begin{tabular}{rccccccc}
    \toprule
          & \multicolumn{3}{c}{This paper's estimator} &       & \multicolumn{3}{c}{\cite{leung2015two}'s estimator} \\
    \multicolumn{1}{c}{$n$} & $\beta_1$ & $\beta_2$ & $\beta_3$ &       & $\beta_1$ & $\beta_2$ & $\beta_3$ \\
    \midrule
    \multicolumn{1}{c}{50} & -1.922 & 0.966 & 0.936 &       & -2.092 & 0.827 & 1.390 \\
          & (0.049) & (0.101) & (0.069) &       & (0.249) & (0.484) & (1.024) \\
    \multicolumn{1}{c}{100} & -1.930 & 0.951 & 0.926 &       & -2.085 & 0.807 & 1.435 \\
          & (0.035) & (0.058) & (0.033) &       & (0.198) & (0.478) & (0.985) \\
    \multicolumn{1}{c}{250} & -1.967 & 1.050 & 0.973 &       & -2.065 & 0.864 & 1.334 \\
          & (0.042) & (0.101) & (0.055) &       & (0.170) & (0.476) & (0.961) \\
    \multicolumn{1}{c}{500} & -2.015 & 1.065 & 0.975 &       & -2.047 & 0.919 & 1.237 \\
          & (0.036) & (0.057) & (0.035) &       & (0.160) & (0.476) & (0.950) \\
    \bottomrule
    \end{tabular}
    \begin{minipage}{12cm}
    \vspace{0.1cm}
    \scriptsize
    \linespread{1}
    This table gives the mean of each estimator across the 1000 Monte Carlo estimates. The standard deviation of the Monte Carlo estimates is reported below the mean value of the point estimates in parentheses (this is a quantile-based estimate which uses the 0.05 and 0.95 quantiles of the Monte Carlo distribution of point estimates and the assumption of Normality).
    \end{minipage}
  \label{desig 1}%
\end{table} 

\begin{table}[htbp]
  \centering
  \caption{Scenario 2 Correlated Private Information ($\alpha_L=-2/3, \alpha_H=-1/6,\gamma=1$)}
    \begin{tabular}{rccccccc}
    \toprule
          & \multicolumn{3}{c}{This paper's estimator} &       & \multicolumn{3}{c}{\cite{leung2015two}'s estimator} \\
    \multicolumn{1}{c}{$n$} & $\beta_1$ & $\beta_2$ & $\beta_3$ &       & $\beta_1$ & $\beta_2$ & $\beta_3$ \\
    \midrule
    \multicolumn{1}{c}{50} & -1.921 & 0.952 & 0.928 &       & -2.072 & 0.866 & 1.294 \\
          & (0.043) & (0.108) & (0.069) &       & (0.280) & (0.698) & (1.521) \\
    \multicolumn{1}{c}{100} & -1.932 & 0.909 & 0.902 &       & -2.079 & 0.820 & 1.408 \\
          & (0.048) & (0.034) & (0.021) &       & (0.258) & (0.732) & (1.539) \\
    \multicolumn{1}{c}{250} & -1.955 & 0.956 & 0.925 &       & -2.056 & 0.888 & 1.287 \\
          & (0.045) & (0.055) & (0.029) &       & (0.242) & (0.738) & (1.509) \\
    \multicolumn{1}{c}{500} & -2.015 & 1.023 & 0.953 &       & -2.035 & 0.956 & 1.162 \\
          & (0.047) & (0.069) & (0.038) &       & (0.233) & (0.724) & (1.460) \\
    \bottomrule
    \end{tabular} \\
    \begin{minipage}{12cm}
    \vspace{0.1cm}
    \scriptsize
    \linespread{1}
    This table gives the mean of each estimator across the 1000 Monte Carlo estimates. The standard deviation of the Monte Carlo estimates is reported below the mean value of the point estimates in parentheses (this is a quantile-based estimate which uses the 0.05 and 0.95 quantiles of the Monte Carlo distribution of point estimates and the assumption of Normality).
    \end{minipage}
  \label{desig 2}%
\end{table}

\indent As can be seen in Table \ref{desig 1} and \ref{desig 2}, when the private information is correlated with observed individual characteristics $X$, this paper's approach yields good estimates for the parameters, while \cite{leung2015two}'s estimator doesn't perform well, both in terms of the mean and variance of the estimators. This is not surprising since \cite{leung2015two} assumes that private information and observable individual characteristics are independent. Under the correlated scenario, \cite{leung2015two}'s estimator will not be consistent. Table \ref{desig 3} shows the simulation results when the individual private information $A$ is independent with observed characteristics $X$. Not surprisingly, both this paper's estimator and \cite{leung2015two}'s estimator perform reasonably well, except that \cite{leung2015two}'s estimator has  larger variances.
\begin{table}[htbp]
  \centering
  \caption{Scenario 3 Independent Private Information ($\alpha_L=-1/2, \alpha_H=-1/2, \gamma=0$)}
    \begin{tabular}{rccccccc}
    \toprule
          & \multicolumn{3}{c}{This paper's estimator} &       & \multicolumn{3}{c}{\cite{leung2015two}'s estimator} \\
    \multicolumn{1}{c}{$n$} & $\beta_1$ & $\beta_2$ & $\beta_3$ &       & $\beta_1$ & $\beta_2$ & $\beta_3$ \\
    \midrule
    \multicolumn{1}{c}{50} & -1.913 & 0.951 & 0.925 &       & -2.046 & 0.909 & 1.181 \\
          & (0.040) & (0.050) & (0.025) &       & (0.205) & (0.482) & (0.936) \\
    \multicolumn{1}{c}{100} & -1.921 & 0.925 & 0.911 &       & -2.014 & 0.956 & 1.085 \\
          & (0.027) & (0.030) & (0.014) &       & (0.154) & (0.470) & (0.901) \\
    \multicolumn{1}{c}{250} & -1.945 & 0.946 & 0.919 &       & -2.016 & 0.944 & 1.109 \\
          & (0.030) & (0.030) & (0.014) &       & (0.140) & (0.464) & (0.901) \\
    \multicolumn{1}{c}{500} & -1.994 & 1.043 & 0.965 &       & -2.010 & 0.967 & 1.064 \\
          & (0.029) & (0.042) & (0.020) &       & (0.134) & (0.463) & (0.892) \\
    \bottomrule
    \end{tabular}
    \begin{minipage}{12cm}
    \vspace{0.1cm}
    \scriptsize
    \linespread{1}
    This table gives the mean of each estimator across the 1000 Monte Carlo estimates. The standard deviation of the Monte Carlo estimates is reported below the mean value of the point estimates in parentheses (this is a quantile-based estimate which uses the 0.05 and 0.95 quantiles of the Monte Carlo distribution of point estimates and the assumption of Normality).
    \end{minipage}
  \label{desig 3}%
\end{table}%

\section{Conclusion}
In this paper, I characterize the network formation process as a static game of incomplete
information, where the latent payoff of forming a link between two individuals depends on the structure of the network, as well as private information on agents’ attributes. I allow agents’
private unobserved attributes to be correlated with observables (i.e. existence of individual
fixed effects). Using data from a single large network, I propose a two-step estimator for the
model primitives. In the first step, I estimate agents’ equilibrium beliefs of other people’s
choice probabilities. In the second step, I plug in the first-step estimator to the conditional choice probability expression and estimate the model parameters and the unobserved individual fixed effects together using Joint MLE. Assuming that the observed attributes are discrete, I showed that the first step estimator is uniformly consistent with the rate $n^{-1/2}$, where $n$ is the number of individuals in the network. This rate corresponds to the usual $N^{-1/4}$ rate where $N$ stands for the total number of linking proposals and is the effective sample size. The slow convergence rate is translated to the second step so that the usual asymptotic bias of order $N^{-1/2}$ caused by the "incidental parameter problem" won't show up in the asymptotic distribution. The second-step estimator $\hat\beta$ subtracted by its mean converges asymptotically to a normal distribution at the rate $N^{-1/4}$. Monte Carlo Simulation shows that the estimator proposed in this paper performs well in finite samples.

\newpage
\setcounter{section}{1}
\section*{Appendix}
\subsection{Lemmas}
\indent The next two lemmas are to be used in the proofs of the asymptotics.
\begin{lemma}\label{G-Q}
    Under Assumptions 1,2 and 3,
\begin{align*}
    \sup_{1\leqslant i\leqslant n}\left|\frac{1}{n-1}\sum_{j\neq i}(G_{ij}-Q_{ij})\right|< O\left(\sqrt{\frac{\ln n}{n}}\right)
\end{align*}
with probability $1-O(n^{-2})$, and 
\begin{align*}
    \sup_{1\leqslant i\leqslant n}\left|\frac{1}{n-1}\sum_{j\neq i}(G_{ij}-\hat Q_{ij})\right|<O\left(\sqrt{\frac{\ln n}{n}}\right)
\end{align*}
with probability $1-O(n^{-2})$, where 
\begin{align*}
    Q_{ij}&:=Q_{ij}(\beta_0, A_{i0}, Z_{ij})\\
    \hat Q_{ij}&:=Q_{ij}(\beta_0, A_{i0}, \hat Z_{ij}).
\end{align*}
\end{lemma}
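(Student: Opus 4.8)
The plan is to prove the two bounds in turn, treating the first as a clean conditional concentration statement and the second as a perturbation of it. For the first bound I would condition on the realized $(X, A, \sigma^*)$. By Assumption \ref{independence}(b)--(d), once $(X, A_i, \sigma^*)$ is held fixed the link indicators $\{G_{ij}\}_{j\neq i}$ are mutually independent Bernoulli variables with means $Q_{ij}=Q_{ij}(\beta_0,A_{i0},Z_{ij})$, so for each fixed $i$ the quantity $\frac{1}{n-1}\sum_{j\neq i}(G_{ij}-Q_{ij})$ is a centered average of $n-1$ independent variables taking values in $[0,1]$. Hoeffding's inequality then gives $\Pr\big(|\frac{1}{n-1}\sum_{j\neq i}(G_{ij}-Q_{ij})|>t\big)\leqslant 2\exp(-2(n-1)t^2)$. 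Setting $t=C\sqrt{\ln n/n}$ makes the right-hand side of order $n^{-2C^2}$, and a union bound over the $n$ choices of $i$ yields a total failure probability of order $n^{1-2C^2}$; taking $C=\sqrt{2}$ makes this $O(n^{-3})=O(n^{-2})$. Because the bound holds uniformly over the conditioning values, it also holds unconditionally, which is the first claim.

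For the second bound I would add and subtract $Q_{ij}$ and use the triangle inequality,
\[
\Big|\frac{1}{n-1}\sum_{j\neq i}(G_{ij}-\hat Q_{ij})\Big|\leqslant \Big|\frac{1}{n-1}\sum_{j\neq i}(G_{ij}-Q_{ij})\Big|+\frac{1}{n-1}\sum_{j\neq i}\big|Q_{ij}-\hat Q_{ij}\big|.
\]
The first term on the right is already controlled by the first claim. For the second, $Q_{ij}$ is the logistic CDF evaluated at $Z_{ij}'\beta_0+A_{i0}$ while $\hat Q_{ij}$ is the same object with $\hat Z_{ij}$ in place of $Z_{ij}$; the mean-value theorem together with the elementary bound $Q(1-Q)\leqslant 1/4$ gives $|Q_{ij}-\hat Q_{ij}|\leqslant \tfrac14\|\beta_0\|\,\|\hat Z_{ij}-Z_{ij}\|$. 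Since $\hat Z_{ij}$ and $Z_{ij}$ differ only in the two $q$-coordinates, and averaging over $k$ in the third coordinate cannot inflate the error, this is bounded uniformly in $(i,j)$ by a constant multiple of $\sup_{s,t}|\hat q_{n,st}-q_{st}|$.

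The remaining work is to show that $\sup_{s,t}|\hat q_{n,st}-q_{st}|=O(\sqrt{\ln n/n})$ holds with probability $1-O(n^{-2})$, so that the perturbation term is of the same order as the stated rate and the high-probability conclusion survives; this is where I expect the main obstacle. Lemma \ref{qhat} delivers only an $O_p(n^{-1/2})$ rate, which is a statement in probability and does not by itself furnish the $1-O(n^{-2})$ confidence level, so I would strengthen the concentration argument behind it to an exponential tail bound. Grouping the numerator of $\hat q_{n,st}$ by proposer $i$ and conditioning on $(X,A)$, the estimator splits into (i) a Bernoulli fluctuation of $\hat q_{n,st}$ around $\tfrac{1}{N_{st}}\sum Q_{ij}$, which averages of order $n^2$ conditionally independent indicators and hence concentrates at the faster rate $n^{-1}$, and (ii) the fluctuation of $\tfrac{1}{N_{st}}\sum Q_{ij}$ around $q_{st}$, which after grouping is an average of of order $n$ independent bounded functions of the i.i.d.\ fixed effects $A_i$ and so concentrates only at rate $n^{-1/2}$. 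Applying Hoeffding (or Bernstein) to each piece with threshold $C\sqrt{\ln n/n}$ produces exponential tails, and since the support $\mathbb X$ is finite by Assumption \ref{independence}(a) the supremum over the finitely many type pairs costs only a finite union bound that preserves the $1-O(n^{-2})$ probability. Feeding this into the decomposition closes the second claim. The delicate point throughout is keeping the dependence structure transparent: it is the $A_i$-driven term (ii), not the Bernoulli term (i), that pins down the $\sqrt{\ln n/n}$ rate, mirroring the incidental-parameter source of slow convergence emphasized in the main text.
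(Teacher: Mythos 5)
Your proof of the first display is essentially the paper's own: Hoeffding's inequality for each fixed $i$ (the $G_{ij}$ being conditionally independent Bernoulli draws given $(X,A_i,\sigma^*)$), a threshold of order $\sqrt{\ln n/n}$ chosen so the per-$i$ tail is $O(n^{-3})$, and Boole's inequality over the $n$ indices; the paper merely keeps the sharper range factor $(1-2\kappa)^2$ from condition (\ref{bdd}), which is immaterial. For the second display you also start exactly as the paper does, with the triangle inequality and a mean-value bound $|Q_{ij}-\hat Q_{ij}|\leqslant \tfrac14\|\beta_0\|\,\|\hat Z_{ij}-Z_{ij}\|\leqslant C\sup_{s,t}|\hat q_{n,st}-q_{st}|$. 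The genuine divergence is what comes next: the paper simply invokes Lemma \ref{qhat} to write $|Q_{ij}-\hat Q_{ij}|=O_p(n^{-1/2})$ and declares that ``the second conclusion follows from the first,'' whereas you correctly observe that an in-probability rate cannot by itself deliver a statement holding with probability $1-O(n^{-2})$, and you repair this by strengthening the first-step result to an exponential-tail bound: conditioning on $(X,A)$, you split $\hat q_{n,st}-q_{st}$ into a sum of order-$n^2$ conditionally independent Bernoulli fluctuations (rate $\sim n^{-1}$ by Hoeffding) plus an average over the type-$x_s$ agents of i.i.d.\ bounded functions of the fixed effects $A_i$ (rate $\sqrt{\ln n/n}$), with a finite union bound over the $T_x^2$ type pairs. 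This decomposition is sound --- the symmetry of the equilibrium is what makes the second piece an i.i.d.\ average, since conditional on $X$ the belief components of $Z_{ij}$ depend only on observables, so $Q_{ij}$ for a fixed type pair is a function of $A_i$ alone --- and it parallels the variance decomposition (independence across proposers, $A_i$-induced covariance within a proposer) that the paper itself uses to prove Lemma \ref{qhat} via Chebyshev. The trade-off: the paper's route is shorter but leaves the $1-O(n^{-2})$ qualifier on the second display unsubstantiated as written, while your route costs a re-proof of a sharpened Lemma \ref{qhat} but makes the probability bookkeeping internally consistent, which is what downstream uses (Theorem \ref{uniform A} and Lemma \ref{Ahat-A}) actually require. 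The only cosmetic slips are harmless: your perturbation term is then $O(\sqrt{\ln n/n})$ rather than the paper's nominal $O_p(n^{-1/2})$, which still matches the stated rate, and your Hoeffding constant bookkeeping ($C=\sqrt{2}$ giving $n^{-3}$ after the union bound) is correct.
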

\begin{proof}
The first conclusion comes by applying Hoeffding's inequality
\begin{align*}
    Pr\left(\left|\frac{1}{n-1}\sum_{j \neq i}(G_{ij}-Q_{ij})\right|\geqslant\epsilon\right)\leqslant 2\exp\left(-\frac{2(n-1)\epsilon^2}{(1-2\kappa)^2}\right)
\end{align*}
for $\kappa$ as defined by (\ref{bdd}). Setting $\epsilon=\sqrt{\frac{3(1-2\kappa)^2}{2}\frac{\ln n}{n}}$ gives
\begin{align*}
    Pr&\left(\left|\frac{1}{n-1}\sum_{j \neq i}(G_{ij}-Q_{ij})\right|\geqslant\sqrt{\frac{3(1-2\kappa)^2}{2}\frac{\ln n}{n}}\right)\\
    &\leqslant 2\exp\left(-\frac{2(n-1)}{(1-2\kappa)^2}\frac{3(1-2\kappa)^2}{2}\frac{\ln n}{n}\right)\\
    &=2\exp\left(\ln\left(\frac{1}{n^3}\right)\frac{(n-1)}{n}\right)\\
    &=2\left(\frac{1}{n^3}\right)^{\frac{(n-1)}{n}}\\
    &=O(n^{-3}).
\end{align*}
Applying Boole's inequality then gives
\begin{align*}
    Pr&\left(\max_{1\leqslant i\leqslant n}\left|\frac{1}{n-1}\sum_{j \neq i}(G_{ij}-Q_{ij})\right|\geqslant \sqrt{\frac{3(1-2\kappa)^2}{2}\frac{\ln n}{n}}\right)\\
    &\leqslant n*O(n^{-3})\\
    &=O(n^{-2}),
\end{align*}
from which the first conclusion follows.\\
\indent To prove the second conclusion, first, observe that for any $i,j$
\begin{align*}
    \left|\frac{1}{n-1}\sum_{j\neq i}(G_{ij}-\hat Q_{ij})\right|\leqslant\left|\frac{1}{n-1}\sum_{j\neq i}(G_{ij}- Q_{ij})\right|+\left|\frac{1}{n-1}\sum_{j\neq i}(Q_{ij}-\hat Q_{ij})\right|.
\end{align*}
By the triangle inequality,
\begin{align*}
    \left|\frac{1}{n-1}\sum_{j\neq i}(Q_{ij}-\hat Q_{ij})\right|\leqslant\sup_{i,j}\left|Q_{ij}-\hat Q_{ij}\right|
\end{align*}
Applying mean value expansion gives that for any $i,j$
\begin{align*}
    \left|Q_{ij}-\hat Q_{ij}\right|&=\left|\frac{\exp(\bar Z_{ij}\beta_0+A_{i,0})\beta_0'}{(1+\exp(\bar Z_{ij}\beta_0+A_{i,0}))^2}(\hat Z_{ij}-Z_{ij})\right|\\
    &=O_p(1)O_p(n^{-1/2})\\
    &=O_p(n^{-1/2})
\end{align*}
where the second equality comes from condition (\ref{bdd}), Assumption \ref{Compact Support} and Lemma \ref{qhat}. The second conclusion follows from the first conclusion. 
\end{proof}
\begin{lemma} \label{Ahat-A}
    Under Assumptions \ref{independence},\ref{Equilibrium Selection} and \ref{Compact Support}, $\hat A_i(\beta_0)-A_i(\beta_0)$ has the asymptotically linear representation
\begin{align*}
    \hat A_i(\beta_0)-A_i(\beta_0)=\frac{\sum_{j \neq i}(G_{ij}-Q_{ij})}{\sum_{j \neq i}Q_{ij}(1-Q_{ij})}+\frac{\sum_{j \neq i}\hat Q_{ij}-Q_{ij}}{\sum_{j \neq i} Q_{ij}(1-Q_{ij})}+O_P\left(\frac{\ln n}{n}\right)
\end{align*}
\end{lemma}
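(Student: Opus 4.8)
The plan is to start from the first-order condition defining the concentrated estimator $\hat A_i(\beta_0)$ and to Taylor-expand it around the truth, which by Assumption \ref{Joint FE Identification} and $\mathbb E[G_{ij}]=Q_{ij}$ is $A_i(\beta_0)=A_{i0}$. Because $Q_{ij}$ has the logit form, the individual score in $A_i$ collapses: differentiating the $i$-th block of (\ref{loglikelihood}) and using $\partial Q/\partial A_i=Q(1-Q)$, the estimating equation reduces to
\begin{align*}
\frac{1}{n-1}\sum_{j\neq i}\big(G_{ij}-\tilde Q_{ij}(\hat A_i)\big)=0,
\end{align*}
where $\tilde Q_{ij}(a)$ is the logit CCP at $(\beta_0,\hat Z_{ij},a)$, so that $\tilde Q_{ij}(A_{i0})=\hat Q_{ij}$ in the notation of Lemma \ref{G-Q}. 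This is exactly the fixed-point characterization (\ref{fixedpointsolution}), and replacing $G_{ij},\hat Q_{ij}$ by $Q_{ij}$ gives the population analogue solved by $A_{i0}$, pinning down the centering.

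Next I would expand $\tilde Q_{ij}(\hat A_i)$ to second order about $A_{i0}$, using $\partial_a\tilde Q_{ij}=\tilde Q_{ij}(1-\tilde Q_{ij})$ and $\partial^2_a\tilde Q_{ij}=\tilde Q_{ij}(1-\tilde Q_{ij})(1-2\tilde Q_{ij})$, both bounded by (\ref{bdd}). The estimating equation becomes
\begin{align*}
\frac{1}{n-1}\sum_{j\neq i}(G_{ij}-\hat Q_{ij})=(\hat A_i-A_{i0})\frac{1}{n-1}\sum_{j\neq i}\hat Q_{ij}(1-\hat Q_{ij})+\tfrac12(\hat A_i-A_{i0})^2\frac{1}{n-1}\sum_{j\neq i}\partial^2_a\tilde Q_{ij}(\bar A_i),
\end{align*}
for an intermediate value $\bar A_i$. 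Solving for $\hat A_i-A_{i0}$ produces a leading ratio plus a quadratic remainder whose post-factored numerator is $O(1)$ and whose denominator is bounded away from zero, both by (\ref{bdd}).

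The key step is to bound the quadratic remainder uniformly in $i$. Here I would invoke Theorem \ref{uniform A}, which gives $\sup_i(\hat A_i-A_{i0})^2=O(\ln n/n)$ with probability $1-O(n^{-2})$, so the remainder is $O_p(\ln n/n)$. It then remains to put the leading ratio into the stated form. Decomposing the numerator $G_{ij}-\hat Q_{ij}=(G_{ij}-Q_{ij})-(\hat Q_{ij}-Q_{ij})$ isolates the sampling term and the first-step plug-in term, the two pieces appearing in the representation. Finally I would replace the curvature weights $\hat Q_{ij}(1-\hat Q_{ij})$ in the denominator by $Q_{ij}(1-Q_{ij})$: since $\sup_{ij}|\hat Q_{ij}-Q_{ij}|=O_p(n^{-1/2})$ (from the mean-value bound in the proof of Lemma \ref{G-Q}, resting on Lemma \ref{qhat}), while the two averaged numerators are $O_p(\sqrt{\ln n/n})$ and $O_p(n^{-1/2})$ respectively, the denominator substitution contributes only $O_p(\sqrt{\ln n}/n)$ and $O_p(n^{-1})$, each $o(\ln n/n)$ and hence absorbed into the remainder.

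The main obstacle is exactly this uniform control of the second-order term: for a fixed $i$ the curvature term is only $O_p(1/n)$, but the representation must hold simultaneously across all $n$ incidental parameters, and it is the logarithmic factor from the maximal inequality underlying Theorem \ref{uniform A} that both dictates the $O_p(\ln n/n)$ remainder and ensures it dominates the curvature and denominator-substitution errors. Everything else is routine given the boundedness (\ref{bdd}) of the logit CCP and its first two derivatives away from $0$ and $1$.
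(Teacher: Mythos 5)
Your argument is correct and takes essentially the same route as the paper's proof: both start from the first-order condition $\sum_{j\neq i}\big(G_{ij}-Q_{ij}(\beta_0,\hat A_i(\beta_0),\hat q_{ij})\big)=0$, perform a second-order mean-value expansion in $A_i$ using the logit derivatives $Q(1-Q)$ and $Q(1-Q)(1-2Q)$, bound the quadratic remainder uniformly in $i$ via the $\sup_i|\hat A_i-A_{i0}|\leqslant O_p\big(\sqrt{\ln n/n}\big)$ rate of Theorem \ref{uniform A} (which rests on Lemma \ref{G-Q}, so no circularity), and then replace the curvature weights $\hat Q_{ij}(1-\hat Q_{ij})$ by $Q_{ij}(1-Q_{ij})$ using $\sup_{i,j}|\hat Q_{ij}-Q_{ij}|=O_p(n^{-1/2})$ from Lemmas \ref{qhat} and \ref{G-Q}, exactly as the paper does. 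One point in your favor: your algebraically correct decomposition $G_{ij}-\hat Q_{ij}=(G_{ij}-Q_{ij})-(\hat Q_{ij}-Q_{ij})$ yields a \emph{minus} sign on the plug-in term, whereas the paper's proof and the lemma statement carry a plus sign; this is a sign slip in the paper, not a gap in your argument.
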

\begin{proof}
Consider the first order condition with respect to $A$ 
\begin{align*}
    \frac{\partial\mathcal{L}_n(\beta_0, A, \hat q)}{\partial A}\bigg|_{A=\hat A(\beta_0)}=0;
\end{align*}
a mean value expansion gives that for all $i$
\begin{align}
     0=&\sum_{j\neq i}\left(G_{ij}- Q_{ij}(\beta_0, \hat A_i(\beta_0), \hat q_{ij})\right)\notag\\
     =&\sum_{j\neq i}\left(G_{ij}- Q_{ij}(\beta_0, A_i(\beta_0), \hat q_{ij})\right)\notag\\
     &-\sum_{j\neq i}(\hat A_i(\beta_0)-A_i(\beta_0))Q_{ij}(\beta_0, A_i(\beta_0), \hat q_{ij})[1-Q_{ij}(\beta_0, A_i(\beta_0), \hat q_{ij})]\notag\\
     &+\frac{1}{2}\sum_{j\neq i}(\hat A_i(\beta_0)-A_i(\beta_0))^2Q_{ij}(\beta_0, \bar A_i(\beta_0), \hat q_{ij})[1-Q_{ij}(\beta_0, \bar A_i(\beta_0), \hat q_{ij})][1-2Q_{ij}(\beta_0, \bar A_i(\beta_0), \hat q_{ij})].\label{expansionofa}
\end{align}
Denote the last term by $R_i$. The Triangle Inequality and Condition (\ref{bdd}) then implies
\begin{align}
    |R_i|\leqslant&\frac{1}{2}|\hat A_i(\beta_0)-A_i(\beta_0)|^2\sum_{j\neq i}\left|Q_{ij}(\beta_0, \bar A_i(\beta_0), \hat q_{ij})[1-Q_{ij}(\beta_0, \bar A_i(\beta_0), \hat q_{ij})][1-2Q_{ij}(\beta_0, \bar A_i(\beta_0), \hat q_{ij})]\right|\\
    \leqslant&\lambda_n^2O_p(n-1),\label{ri}
\end{align}
where $\lambda_n=\sup_{1\leqslant i\leqslant n}\left|\hat A_i-A_{i0}\right|\leqslant O_p(\sqrt{\frac{\ln n}{n}})$ according to Theorem \ref{uniform A}. From (\ref{expansionofa}) I have
\begin{align*}
    &\hat A_i(\beta_0)-A_i(\beta_0)\\
    =&\frac{\sum_{j \neq i}[G_{ij}-Q_{ij}(\beta_0, A_i(\beta_0), \hat q_{ij})]}{\sum_{j \neq i}Q_{ij}(\beta_0, A_i(\beta_0), \hat q_{ij})[1-Q_{ij}(\beta_0, A_i(\beta_0), \hat q_{ij})]}+\frac{R_i}{\sum_{j \neq i}Q_{ij}(\beta_0, A_i(\beta_0), \hat q_{ij})[1-Q_{ij}(\beta_0, A_i(\beta_0), \hat q_{ij})]}\\
    =&\frac{\sum_{j \neq i}[G_{ij}-Q_{ij}(\beta_0, A_i(\beta_0), \hat q_{ij})]}{\sum_{j \neq i}Q_{ij}(\beta_0, A_i(\beta_0), \hat q_{ij})[1-Q_{ij}(\beta_0, A_i(\beta_0), \hat q_{ij})]}+O_p\left(\frac{\sqrt{\ln n}}{n}\right)+O_p\left(\frac{\ln n}{n}\right)\\
    =&\frac{\sum_{j \neq i}(G_{ij}-Q_{ij})}{\sum_{j \neq i}Q_{ij}(1-Q_{ij})}+\frac{\sum_{j \neq i}\hat Q_{ij}-Q_{ij}}{\sum_{j \neq i} Q_{ij}(1-Q_{ij})}+O_p\left(\frac{\sqrt{\ln n}}{n}\right)
\end{align*}
where the second equality follows from (\ref{ri}) and Condition (\ref{bdd}) and the third equality come from a similar argument as in the proof of Lemma \ref{G-Q}. More specifically, from the proof of Lemma \ref{G-Q}, I know that $\hat Q_{ij}=Q_{ij}+O_p\left(\frac{1}{\sqrt n}\right)$, then applying Condition (\ref{bdd}) yields
\begin{align*}
    &\frac{\sum_{j \neq i}(G_{ij}-\hat Q_{ij})}{\sum_{j \neq i}\hat Q_{ij}(1-\hat Q_{ij})}\\
    =&\frac{\sum_{j \neq i}\left(G_{ij}- Q_{ij}\right)}{\sum_{j \neq i}\hat Q_{ij}(1-\hat Q_{ij})}+\frac{\sum_{j \neq i}\hat Q_{ij}-Q_{ij}}{\sum_{j \neq i} \hat Q_{ij}(1-\hat Q_{ij})}\\
    =&\frac{\sum_{j \neq i}\left(G_{ij}- Q_{ij}\right)}{\sum_{j \neq i} Q_{ij}(1- Q_{ij})}+\frac{\left(\left[\sum_{j \neq i}Q_{ij}(1-Q_{ij})\right]-\left[\sum_{j \neq i}\hat Q_{ij}(1-\hat Q_{ij})\right]\right)\sum_{j \neq i}\left(G_{ij}- Q_{ij}\right)}{\left[\sum_{j \neq i}Q_{ij}(1-Q_{ij})\right]\left[\sum_{j \neq i}\hat Q_{ij}(1-\hat Q_{ij})\right]}\\
    &+\frac{\sum_{j \neq i}\hat Q_{ij}-Q_{ij}}{\sum_{j \neq i} Q_{ij}(1-Q_{ij})}\\
    =&\frac{\sum_{j \neq i}\left(G_{ij}- Q_{ij}\right)}{\sum_{j \neq i} Q_{ij}(1- Q_{ij})}+\frac{\sum_{j \neq i}\hat Q_{ij}-Q_{ij}}{\sum_{j \neq i} Q_{ij}(1-Q_{ij})}+O_p\left(\frac{1}{n}\right),
\end{align*}
the conclusion thus follows.
\end{proof}

\subsection{Proofs}
\begin{proof}[\proofname\ of Lemma \ref{qhat}]
   As specified in (\ref{optimal strategy}), the optimal linking decision of agent $i$ with agent $j$ is $G_{ij}(X,A_{i}, \varepsilon_i, \sigma^*)$.
\begin{align*}
    &\sup_{s,t}\big|\hat q_{n, st}-q_{st}(X,\sigma^*)\big|\\
    =&\sup_{s,t}\Bigg|\frac{\sum_{i}\sum_{j\neq i}\big(G_{ij}-P(G_{ij}=1|X_{i}=x_s, X_{j}=x_t,X, \sigma^*)\big)\mathbf{1}\big\{X_{i}=x_s,X_{j}=x_t\big\}}{\sum_{i}\sum_{j\neq i}\mathbf{1}\big\{X_{i}=x_s,X_{j}=x_t\big\}}\Bigg|
\end{align*}
Denote the fraction term by $\Delta_{n, st}$. It suffices to show that  
\begin{align*}
\lim_{\eta\rightarrow\infty}\lim_{n\rightarrow\infty}P\bigg(\sup_{s,t}\big|\Delta_{n, st}\big|>\eta n^{-1/2}\bigg)=0.
\end{align*}
By the law of iterated expectations and dominated convergence theorem, it suffices to show
\begin{align*}
    P\bigg(\sup_{s,t}\big|\Delta_{n, st}\big|>\eta n^{-1/2}\bigg|X,\sigma^*\bigg)\xrightarrow{p}0\text{\ as\ } \eta, n \rightarrow \infty.
\end{align*}
Note that 
\begin{align*}
    P\bigg(\sup_{s,t}\big|\Delta_{n, st}\big|>\eta n^{-1/2}\bigg|X,\sigma^*\bigg)&\leqslant\sum_{s,t}P\big(\big|\Delta_{n, st}\big|>\eta n^{-1/2}\big|X,\sigma^*\big)\\
    &\leqslant\sum_{s,t}\frac{n\mathbb E\big(\Delta^2_{st}\big|X,\sigma^*\big)}{\eta^2}\\
    &\leqslant\frac{nT_x^2}{\eta^2}\max_{s,t}\mathbb E \big(\Delta^2_{st}\big|X,\sigma^*\big)
\end{align*}
Then it suffices to show $E \big(\Delta^2_{st}\big|X,\sigma^*\big)=O\big(n^{-1}\big)$ for all $s, t$.
\begin{align}
    &\mathbb E \left(\Delta^2_{st}\big|X,\sigma^*\right)\notag\\
    =&\frac{\sum_i\sum_{i\neq j}Var\left(G_{ij}|X_{i}=x_s, X_{j}=x_t, X, \sigma^*\right)\mathbf{1}\left\{X_{i}=x_s, X_{j}=x_t\right\}}{\left(\sum_i\sum_{j\neq i}\mathbf{1}\left\{X_{i}=x_s, X_{j}=x_t\right\}\right)^2}\notag\\
    +&\frac{\sum_i\sum_{i\neq j}\sum_{k\neq i,j}Cov\left(G_{ij}, G_{ik}|X_{i}=x_s, X_{j}=x_t, X, \sigma^*\right)\mathbf{1}\left\{X_{i}=x_s, X_{j}=x_t\right\}}{\left(\sum_i\sum_{j\neq i}\mathbf{1}\left\{X_{i}=x_s, X_{j}=x_t\right\}\right)^2}
\end{align}
where I used the fact that link proposals from different agents are independent, i.e. $G_{ij}(X, A_i, \varepsilon_i, \sigma)\perp G_{i'j'}(X, A_{i'}, \varepsilon_{i'}, \sigma^*)\big|X,\sigma$, so $Cov\big(G_{ij}, G_{i'j'}|X_i=X_{i'}=x_s, X_j=X_{j'}=x_t, X, \sigma^*\big)=0$ for all $i\neq i'$.\\
\indent Since $G_{ij}$ is a binary random variable, $Var\left(G_{ij}|X_i=x_s, X_j=x_t, X, \sigma^*\right)\leqslant\frac{1}{4}$. The first term is bounded by 
\begin{align*}
    \frac{1}{4}\left(\sum_i\sum_{j\neq i}\mathbf{1}\left\{X_i=x_s, X_j=x_t\right\}\right)^{-1}
\end{align*}
Then for the second term, by Cauchy-Schwarz Inequality,
\begin{align*}
    &Cov\left(G_{ij}, G_{ik}|X_i=x_s, X_j=X_k=x_t, X, \sigma^*\right)\\
    \leqslant & \ Var\left(G_{ij}\big|X_i=x_s, X_j=x_t, X, \sigma^*\right)^{1/2}Var\left(G_{ik}\big|X_i=x_s, X_k=x_t, X, \sigma^*\right)^{1/2}\\
    \leqslant &\ \frac{1}{4}
\end{align*}
so the second term is bounded by 
\begin{align*}
    &\frac{\frac{1}{4n}\frac{1}{n(n-1)(n-2)}\sum_i\sum_{j\neq i}\sum_{k\neq i,j}\mathbf{1}\left\{X_i=x_s, X_j=X_k=x_t\right\}}{\frac{n-1}{n-2}\left(\frac{1}{n(n-1)}\sum_i\sum_{j\neq i}\left\{X_i=x_s, X_j=x_t\right\}\right)^2}\\
    \leqslant&\frac{1}{4n}\frac{\frac{1}{n(n-1)(n-2)}\sum_i\sum_{j\neq i}\sum_{k\neq i,j}\mathbf{1}\left\{X_i=x_s, X_j=X_k=x_t\right\}}{\left(\frac{1}{n(n-1)}\sum_i\sum_{j\neq i}\left\{X_i=x_s, X_j=x_t\right\}\right)^2}
\end{align*}
Both the numerator and denominator are U-statistics. It's straightforward to show that they converge to their expectations. Therefore, the sum of the first and second terms are $O(\frac{1}{n})$, and the proof is complete. 
\end{proof}
\begin{proof}[\proofname\ of Theorem \ref{Consistency}]
    
\indent According to Assumption \ref{Joint FE Identification}, $\beta_0, A_0$ uniquely maximizes $\mathbb E[\mathcal L_n(\beta,A,q)|X,A_0]$. Since $(\hat\beta, \hat A)$ solves $\max_{\beta\in \mathbb B, A\in\mathbb A} \mathcal L_n(\beta, A, \hat q)$, it suffices to show that 
\begin{align}
    \sup_{\beta, A} \bigg|\mathcal L_n(\beta, A, \hat q)-\mathbb E[\mathcal L_n(\beta,A,q)|X,A_0]\bigg|\xrightarrow{p}0.\label{thm1}
\end{align}
By the triangle inequality, the left-hand side is less than or equal to
\begin{align*}
    \underbrace{\sup_{\beta, A} \bigg|\mathcal L_n(\beta, A, \hat q)-\mathbb E[\mathcal L_n(\beta,A,\hat q)|X,A_0]\bigg|}_{I}+\underbrace{\sup_{\beta, A} \bigg|\mathbb E[\mathcal L_n(\beta,A,\hat q)|X,A_0]-\mathbb E[\mathcal L_n(\beta,A,q)|X,A_0]\bigg|}_{II}
\end{align*}
By Continuous Mapping Theorem and Lemma \ref{qhat}, $II=o_p(1)$.
By the Logit formalization of $Q_{ij}(\beta, A_i, \hat q)$,
\begin{align*}
    I&=\sup_{\beta, A} \bigg|\mathcal L_n(\beta, A, \hat q)-\mathbb E[\mathcal L_n(\beta,A,\hat q)|X,A_0]\bigg|\\
    &=\sup_{\beta, A} \Bigg|\frac{1}{n(n-1)}\sum_i\sum_{j\neq i}(G_{ij}-Q_{ij})\ln\Big(\frac{Q_{ij}(\beta, A_i, \hat q)}{1-Q_{ij}(\beta, A_i, \hat q)}\Big)\Bigg|
\end{align*}
where $Q_{ij}:= Q_{ij}(\beta_0, A_0, \hat q)$\\
According to the Triangle Inequality,
\begin{align*}
    &\Bigg|\frac{1}{n(n-1)}\sum_i\sum_{j\neq i}(G_{ij}-Q_{ij})\ln\Big(\frac{Q_{ij}(\beta, A_i, \hat q)}{1-Q_{ij}(\beta, A_i, \hat q)}\Big)\Bigg|\\
    &\leqslant \frac{1}{n}\sum_i\Bigg|\frac{1}{n-1}\sum_{j\neq i}(G_{ij}-Q_{ij})\ln\Big(\frac{Q_{ij}(\beta, A_i, \hat q)}{1-Q_{ij}(\beta, A_i, \hat q)}\Big)\Bigg|
\end{align*}
Condition (\ref{bdd}) implies that $\ln(\frac{\kappa}{1-\kappa})\leqslant\ln\Big(\frac{Q_{ij}(\beta, A_i, \hat q)}{1-Q_{ij}(\beta, A_i, \hat q)}\Big)\leqslant\ln(\frac{1-\kappa}{\kappa})$, thus $(\kappa-1)\ln\frac{1-\kappa}{\kappa}\leqslant(G_{ij}-Q_{ij})\ln\Big(\frac{Q_{ij}(\beta, A_i, \hat q)}{1-Q_{ij}(\beta, A_i, \hat q)}\Big)\leqslant (1-\kappa)\ln\frac{1-\kappa}{\kappa}$. According to Hoeffding's inequality,
\begin{align*}
    Pr\Bigg(\Bigg|\frac{1}{n-1}\sum_{j\neq i}(G_{ij}-Q_{ij})\ln\Big(\frac{Q_{ij}(\beta, A_i, \hat q)}{1-Q_{ij}(\beta, A_i, \hat q)}\Big)\Bigg|\geqslant\epsilon\Bigg)\leqslant 2\exp\Bigg(-\frac{(n-1)\epsilon^2}{2(1-\kappa)^2(\ln\frac{1-\kappa}{\kappa})^2}\Bigg)
\end{align*}
Take $\epsilon=\sqrt{\frac{3\ln n}{n}}$, applying Boole's inequality, for any $\beta\in \mathbb B$, $A\in \mathbb A^n$,
\begin{align*}
     &Pr\Bigg(\max_{1\leqslant i\leqslant n}\Bigg|\frac{1}{n-1}\sum_{j\neq i}(G_{ij}-Q_{ij})\ln\Big(\frac{Q_{ij}(\beta, A_i, \hat q)}{1-Q_{ij}(\beta, A_i, \hat q)}\Big)\Bigg|\geqslant\sqrt{\frac{3\ln n}{n}}\Bigg)\\
     \leqslant&nPr\Bigg(\Bigg|\frac{1}{n-1}\sum_{j\neq i}(G_{ij}-Q_{ij})\ln\Big(\frac{Q_{ij}(\beta, A_i, \hat q)}{1-Q_{ij}(\beta, A_i, \hat q)}\Big)\Bigg|\geqslant\sqrt{\frac{3\ln n}{n}}\Bigg)\\ 
     \leqslant&\left(\frac{2}{n^2}\right)^{-\frac{(n-1)}{2n(1-\kappa)^2(\ln\frac{1-\kappa}{\kappa})^2}}\\
     =&O\big(\frac{1}{n^2}\big)\\
     \Longrightarrow&Pr\Bigg(\Bigg|\frac{1}{n(n-1)}\sum_{i}\sum_{j\neq i}(G_{ij}-Q_{ij})\ln\Big(\frac{Q_{ij}(\beta, A_i, \hat q)}{1-Q_{ij}(\beta, A_i, \hat q)}\Big)\Bigg|\geqslant\sqrt{\frac{3\ln n}{n}}\Bigg)\\
     \leqslant&Pr\Bigg(\frac{1}{n}\sum_{i}\Bigg|\frac{1}{n-1}\sum_{j\neq i}(G_{ij}-Q_{ij})\ln\Big(\frac{Q_{ij}(\beta, A_i, \hat q)}{1-Q_{ij}(\beta, A_i, \hat q)}\Big)\Bigg|\geqslant\sqrt{\frac{3\ln n}{n}}\Bigg)\\
     \leqslant&Pr\Bigg(\max_{1\leqslant i\leqslant n}\Bigg|\frac{1}{n-1}\sum_{j\neq i}(G_{ij}-Q_{ij})\ln\Big(\frac{Q_{ij}(\beta, A_i, \hat q)}{1-Q_{ij}(\beta, A_i, \hat q)}\Big)\Bigg|\geqslant\sqrt{\frac{3\ln n}{n}}\Bigg)\leqslant O\big(\frac{1}{n^2}\big),
\end{align*}
which implies the uniform convergence result:
\begin{align}
    Pr\Bigg(\sup_{\beta, A}\Bigg|\frac{1}{n(n-1)}\sum_i\sum_{j\neq i}(G_{ij}-Q_{ij})\ln\Big(\frac{Q_{ij}(\beta, A_i, \hat q)}{1-Q_{ij}(\beta, A_i, \hat q)}\Big)\Bigg|\geqslant\sqrt{\frac{3\ln n}{n}}\Bigg)\leqslant O\big(\frac{1}{n^2}\big)
\end{align}
and hence $I=o_p(1)$ and (\ref{thm1}) follows.
\end{proof}
\begin{proof}[\proofname\ of Theorem \ref{uniform A}]
    Let $A_0$ denote the population vector of heterogeneity terms and $A_1=\varphi(A_0)$. From (\ref{fixedpointsolution}), I have
\begin{align*}
    A_{1,i}-A_{0,i}=\ln\sum_{j\neq i}G_{ij}-\ln\sum_{j \neq i}\frac{\exp(\hat Z_{ij}'\hat\beta+A_{0i})}{1+\exp(\hat Z_{ij}'\hat\beta+A_{0i})}
\end{align*}
A Taylor expansion of the second term on the right-hand side gives:
\begin{align*}
    &\ln\sum_{j \neq i}\frac{\exp(\hat Z_{ij}'\hat\beta+A_{0i})}{1+\exp(\hat Z_{ij}'\hat\beta+A_{0i})}\\
    =&\ln\sum_{j \neq i}\frac{\exp(\hat Z_{ij}'\beta_0+A_{0i})}{1+\exp(\hat Z_{ij}'\beta_0+A_{0i})}+\frac{\sum_{j\neq i}Q_{ij}(\bar\beta, A_{i0}, \hat Z_{ij})(1-Q_{ij}(\bar\beta, A_{i0}, \hat Z_{ij}))\hat Z_{ij}'}{\sum_{i\neq i}Q_{ij}(\bar\beta, A_{i0}, \hat Z_{ij})}(\hat\beta-\beta_0)
\end{align*}
Using (\ref{bdd}), the compact support of $Z_{ij}$ , and Theorem \ref{Consistency},
\begin{align*}
    &\left|\frac{\sum_{j\neq i}Q_{ij}(\bar\beta, A_{i0}, \hat Z_{ij})(1-Q_{ij}(\bar\beta, A_{i0}, \hat Z_{ij}))\hat Z_{ij}'}{\sum_{i\neq i}Q_{ij}(\bar\beta, A_{i0}, \hat Z_{ij})}(\hat\beta-\beta_0)\right|\\
    \leqslant&\sum_{j\neq i}\left|\frac{Q_{ij}(\bar\beta, A_{i0}, \hat Z_{ij})(1-Q_{ij}(\bar\beta, A_{i0}, \hat Z_{ij}))\hat Z_{ij}'}{\sum_{i\neq i}Q_{ij}(\bar\beta, A_{i0}, \hat Z_{ij})}\right|\left|(\hat\beta-\beta_0)\right|\\
    \leqslant&\frac{\sup_{z\in\mathbb Z}|z'|}{4\kappa}\left|(\hat\beta-\beta_0)\right|\\
    =&O_p(1)\cdot o_p(1)\\
    =&o_p(1).
\end{align*}
I can conclude that
\begin{align*}
    A_{1,i}-A_{0,i}=\ln\sum_{j\neq i}G_{ij}-\ln\sum_{j \neq i}\frac{\exp(\hat Z_{ij}'\beta_0+A_{0i})}{1+\exp(\hat Z_{ij}'\beta_0+A_{0i})}+o_p(1).
\end{align*}
Denote $\hat Q_{ij}:=\frac{\exp(\hat Z_{ij}'\beta_0+A_{0i})}{1+\exp(\hat Z_{ij}'\beta_0+A_{0i})}$, a mean value expansion around $\hat Q_{ij}$ gives
\begin{align*}
    \ln\sum_{j\neq i}G_{ij}=\ln\sum_{j \neq i}\hat Q_{ij}+\frac{\sum_{j\neq i}G_{ij}-\hat Q_{ij}}{\lambda\sum_{j\neq i}G_{ij}+(1-\lambda)\sum_{j\neq i}\hat Q_{ij}},
\end{align*}
for some $\lambda\in(0,1)$. By (\ref{bdd}), for all $i$
\begin{align*}
    \left|\frac{\sum_{j\neq i}(G_{ij}-\hat Q_{ij})}{\lambda\sum_{j\neq i}G_{ij}+(1-\lambda)\sum_{j\neq i}\hat Q_{ij}}\right|&\leqslant\frac{\left|\sum_{j\neq i}(G_{ij}-\hat Q_{ij})\right|}{(n-1)(1-\lambda)\kappa}
\end{align*}
Lemma \ref{G-Q} then gives, with probability $1-O(n^{-2})$, the uniform bound
\begin{align*}
    \sup_{1\leqslant i \leqslant n}\left|A_{1,i}-A_{0,i}\right|<O\left(\sqrt{\frac{\ln n}{n}}\right)
\end{align*}
Then the conclusion follows by applying Lemma 4 in \cite{graham2017econometric}.
\end{proof}
\begin{proof}[\proofname\ of Theorem \ref{asymptotic normality}]
\textit{Step 1. Characterizing the probability limit of the Hessian of the concentrated log-likelihood.}\\
\indent First define the following notations. The Hessian matrix of the joint log-likelihood is given by
\begin{align*}
    H_n=\begin{pmatrix}H_{n,\beta\beta}&H_{n,\beta A}\\H'_{n,\beta A}&H_{n,AA}\end{pmatrix}
\end{align*}
where
\begin{align}
    H_{n,\beta\beta}&=-\sum_{i}\sum_{j\neq i}Z_{ij}Z{ij}'Q_{ij}(1-Q_{ij})\label{hbb}\\
    H'_{n,\beta A}&=-\begin{pmatrix}\sum_{j\neq 1}Q_{1j}(1-Q_{1j})Z_{1j}'\\\vdots\\\sum_{j\neq n}Q_{nj}(1-Q_{nj})Z_{nj}'\end{pmatrix}\label{hba}\\
    H_{n,AA}&=-\begin{pmatrix}\sum_{j\neq 1}Q_{1j}(1-Q_{1j})&\cdots&0\\\vdots&\ddots&\vdots\\0&\cdots&\sum_{j\neq n}Q_{nj}(1-Q_{nj})\end{pmatrix}\label{haa}
\end{align}
and $\hat H_{n,\beta\beta}$, $\hat H'_{n,\beta A}$, and $\hat H_{n,A A}$ are defined by (\ref{hbb}), (\ref{hba}), (\ref{haa}) respectively with $Z_{ij}$ replaced by $\hat Z_{ij}$.\\
\indent Following Amemiya (1985, pp. 125-127), the Hessian of the concentrated likelihood is 
\begin{align*}
    &\frac{\partial^2\mathcal{L}_n^c(\beta_0, \hat A(\beta_0), \hat q)}{\partial\beta\partial\beta'}=\sum_{i=1}^n\sum_{j\neq i}\frac{\partial}{\partial\beta'}s_{\beta,ij}(\beta_0, \hat A_i(\beta_0), \hat q_{ij})\\
    =&\hat H_{n,\beta\beta}-\hat H_{n,\beta A}\hat H^{-1}_{n,AA}\hat H'_{n,\beta A}\\
    =&-\sum_{i=1}^n\sum_{j\neq i}\hat Z_{ij}\hat Z_{ij}'\hat Q_{ij}(1-\hat Q_{ij})+\sum_{i=1}^n\frac{\left(\sum_{j\neq i}\hat Q_{ij}(1-\hat Q_{ij})\hat Z_{ij}\right)\left(\sum_{j\neq i}\hat Q_{ij}(1-\hat Q_{ij})\hat Z_{ij}'\right)}{\sum_{j\neq i}\hat Q_{ij}(1-\hat Q_{ij})},
\end{align*}
which implies
\begin{align}
    &\frac{1}{n(n-1)}\sum_{i=1}^n\sum_{j\neq i}\frac{\partial}{\partial\beta'}s_{\beta,ij}(\beta_0, \hat A_i(\beta_0), \hat q_{ij})\notag\\
    =&-\frac{1}{n(n-1)}\sum_{i=1}^n\sum_{j\neq i}\hat Z_{ij}\hat Z_{ij}'\hat Q_{ij}(1-\hat Q_{ij})\notag\\
    &+\frac{1}{n}\sum_{i=1}^n\frac{\left(\frac{1}{n-1}\sum\limits_{j\neq i}\hat Q_{ij}(1-\hat Q_{ij})\hat Z_{ij}\right)\left(\frac{1}{n-1}\sum\limits_{j\neq i}\hat Q_{ij}(1-\hat Q_{ij})\hat Z_{ij}'\right)}{\frac{1}{n-1}\sum_{j\neq i}\hat Q_{ij}(1-\hat Q_{ij})}\notag\\
   =&-\frac{1}{n(n-1)}\sum_{i=1}^n\sum_{j\neq i}Z_{ij}Z_{ij}'Q_{ij}(1-Q_{ij})\notag\\
   &+\frac{1}{n}\sum_{i=1}^n\frac{\left(\frac{1}{n-1}\sum\limits_{j\neq i}Q_{ij}(1-Q_{ij})Z_{ij}\right)\left(\frac{1}{n-1}\sum\limits_{j\neq i}Q_{ij}(1-Q_{ij})Z_{ij}'\right)}{\frac{1}{n-1}\sum_{j\neq i}Q_{ij}(1-Q_{ij})}+o_p(1)\notag\\
   =&\mathcal{I}_0+o_p(1), \label{plimhessian}
\end{align}
where $\mathcal{I}_0$ is as defined in (\ref{i0}). The second equality in (\ref{plimhessian}) is given by the same logic as the proof of Lemma \ref{G-Q} and more involved calculations.\\
\\
\textit{Step 2. Asymptotic Linear Representation}\\
\indent Consider the first-order condition associated with the concentrated log-likelihood
\begin{align*}
    \frac{\partial\mathcal{L}_n^c(\beta, \hat A(\beta), \hat q)}{\partial\beta}\bigg|_{\beta=\hat\beta}=0;
\end{align*}
a mean value expansion gives
\begin{align*}
    0=\sum_{i=1}^n\sum_{j\neq i}s_{\beta, ij}(\hat\beta, \hat A_i(\hat \beta), \hat q_{ij})=\sum_{i=1}^n\sum_{j\neq i}s_{\beta, ij}(\beta_0, \hat A_i(\beta_0), \hat q_{ij})+\sum_{i=1}^n\sum_{j\neq i}\frac{\partial}{\partial\beta'}s_{\beta,ij}(\bar\beta, \hat A_i(\bar\beta), \hat q_{ij})(\hat \beta-\beta_0),
\end{align*}
which implies
\begin{align*}
    &\sqrt{n}(\hat \beta-\beta_0)\\
    =&\underbrace{-\left[\frac{1}{n^2}\sum_{i=1}^n\sum_{j\neq i}\frac{\partial}{\partial\beta'}s_{\beta,ij}(\bar\beta, \hat A_i(\bar\beta), \hat q_{ij})\right]^{-1}}_{I^{-1}}\underbrace{\left[\frac{1}{n^{3/2}}\sum_{i=1}^n\sum_{j\neq i}s_{\beta, ij}(\beta_0, \hat A_i(\beta_0), \hat q_{ij})\right]}_{II}.
\end{align*}
\indent The first term $I$ converges in probability to $\mathcal I_0$ as defined in (\ref{i0}). I cannot apply a CLT directly to $II$ because of the strong correlation between summands caused by using the same set of data to get $\hat q$, $\hat A$ and estimator of $\beta$. \\
\indent A second order Taylor expansion of $II$ gives
\begin{small}
\begin{align}
    &\frac{1}{n^{3/2}}\sum_{i=1}^n\sum_{j\neq i}s_{\beta, ij}(\beta_0, \hat A_i(\beta_0), \hat q_{ij})\notag\\
    =&\frac{1}{n^{3/2}}\sum_{i=1}^n\sum_{j\neq i}s_{\beta, ij}(\beta_0, A_i(\beta_0), q_{ij})\notag\\
    &-\frac{1}{n^{3/2}}\sum_{i=1}^n\sum_{j\neq i}(\hat A_i(\beta_0)-A_i(\beta_0))Q_{ij}(1-Q_{ij})Z_{ij}\notag\\
    &-\frac{1}{n^{3/2}}\sum_{i=1}^n\sum_{j\neq i}Q_{ij}(1-Q_{ij})Z_{ij}\beta_0'(\hat Z_{ij}-Z_{ij})+Q_{ij}(\hat Z_{ij}-Z_{ij})\notag\\
    &-\frac{1}{2}\frac{1}{n^{3/2}}\sum_{i=1}^n\sum_{j\neq i}(\hat A_i(\beta_0)-A_i(\beta_0))^2\bar Q_{ij}(1-\bar Q_{ij})(1-2\bar Q_{ij})\bar Z_{ij}\notag\\
    &-\frac{1}{2}\frac{1}{n^{3/2}}\sum_{i=1}^n\sum_{j\neq i}(\hat Z_{ij}-Z_{ij})'\nabla_{Z_{ij}Z_{ij}'}s_{\beta, ij}(\beta_0,\bar A_i(\beta_0), \bar q_{ij})(\hat Z_{ij}-Z_{ij})\notag\\
    &-\frac{1}{n^{3/2}}\sum_{i=1}^n\sum_{j\neq i}(\hat A_i(\beta_0)-A_i(\beta_0))\left[\bar Q_{ij}(1-\bar Q_{ij})(1-2\bar Q_{ij})\bar Z_{ij}\beta_0'(\hat Z_{ij}-Z_{ij})+\bar Q_{ij}(1-\bar Q_{ij})(\hat Z_{ij}-Z_{ij})\right]
    \label{2ndorderexpantion},
\end{align}
\end{small}
where $\bar Q_{ij}=\frac{\exp(\bar Z_{ij}\beta_0+ \bar A_i)}{1+\exp(\bar Z_{ij}\beta_0+\bar A_i)}$, with $\bar A_i$ between $\hat A$ and $A$, $\bar Z_{ij}$ between $\hat Z_{ij}$ and $Z_{ij}$, for all $i,j$.\\
The main result follows by showing that 
\begin{enumerate}[(i)]
    \item A CLT can be applied to the second and third terms of (\ref{2ndorderexpantion}).
    \item The first term converges in probability to 0.
    \item The last three terms (second-order terms) converge in probability to 0. 
\end{enumerate}
\indent I start from the last three terms in $(\ref{2ndorderexpantion})$. Condition (\ref{bdd}), compact support and Theorem \ref{uniform A} implies that
\begin{align*}
    &\left|-\frac{1}{2}\frac{1}{n^{3/2}}\sum_{i=1}^n\sum_{j\neq i}(\hat A_i(\beta_0)-A_i(\beta_0))^2\bar Q_{ij}(1-\bar Q_{ij})(1-2\bar Q_{ij})\bar Z_{ij}\right|\\
    \leqslant&\frac{1}{2}\sqrt{n}\left|\lambda_n\right|^2O_p\left(\frac{\ln n}{n}\right)\\
    =&O_p\left(\frac{\ln n}{\sqrt{n}}\right)\\
    =&o_p(1).
\end{align*}
By the same argument, it can be shown that
\begin{small}
\begin{align*}
    &\left|-\frac{1}{2}\frac{1}{n^{3/2}}\sum_{i=1}^n\sum_{j\neq i}(\hat Z_{ij}-Z_{ij})'\nabla_{Z_{ij}Z_{ij}'}s_{\beta, ij}(\beta_0,\bar A_i(\beta_0), \bar q_{ij})(\hat Z_{ij}-Z_{ij})\right|=o_p(1)\\
    &\left|-\frac{1}{n^{3/2}}\sum_{i=1}^n\sum_{j\neq i}(\hat A_i(\beta_0)-A_i(\beta_0))\left[\bar Q_{ij}(1-\bar Q_{ij})(1-2\bar Q_{ij})\bar Z_{ij}\beta_0'(\hat Z_{ij}-Z_{ij})+\bar Q_{ij}(1-\bar Q_{ij})(\hat Z_{ij}-Z_{ij})\right]\right|\\
    &=o_p(1).
\end{align*}
\end{small}
\indent Then I consider the first term in (\ref{2ndorderexpantion}). By Lemma \ref{G-Q}, 
\begin{align*}
    &\frac{1}{n^{3/2}}\sum_{i=1}^n\sum_{j\neq i}s_{\beta, ij}(\beta_0, A_i(\beta_0), q_{ij})\\
    =&\frac{1}{n^{3/2}}\sum_{i=1}^n\sum_{j\neq i}(G_{ij}-Q_{ij})Z_{ij}\\
    \leqslant&\frac{1}{\sqrt{n}}\sup_{Z\in\mathbb{Z}}|Z|\frac{1}{n}\sum_{i=1}^n\sum_{j\neq i}(G_{ij}-Q_{ij})\\
    \leqslant&\frac{1}{\sqrt{n}}O_p\left(1\right)\\
    =&o_p(1).
\end{align*}
where the second inequality comes from the fact that $\frac{1}{\sqrt{n(n-1)}}\sum_{i=1}^n\sum_{j\neq i}(G_{ij}-Q_{ij})=O_p(1)$. This is true because $G_{ij}$ are independent conditional on $A$ and $X$. Applying the central limit theorem yields the desired conclusion.\\
\indent Then look at the second term. Applying Lemma \ref{G-Q} and \ref{Ahat-A} yields
\begin{align}
    &\frac{1}{n^{3/2}}\sum_{i=1}^n\sum_{j\neq i}(\hat A_i(\beta_0)-A_i(\beta_0))Q_{ij}(1-Q_{ij})Z_{ij}\notag\\
    =&\frac{1}{n^{3/2}}\sum_{i=1}^n\sum_{j\neq i}\left( \frac{\sum_{j \neq i}(G_{ij}-Q_{ij})}{\sum_{j \neq i}Q_{ij}(1-Q_{ij})}+\frac{\sum_{j \neq i}\hat Q_{ij}-Q_{ij}}{\sum_{j \neq i} Q_{ij}(1-Q_{ij})}+O_P\left(\frac{1}{n}\right)\right)Q_{ij}(1-Q_{ij})Z_{ij}\notag\\
    =&\frac{1}{n^{3/2}}\sum_{i=1}^n\sum_{j\neq i}\left(\frac{\sum_{j \neq i}\hat Q_{ij}-Q_{ij}}{\sum_{j \neq i} Q_{ij}(1-Q_{ij})}\right)Q_{ij}(1-Q_{ij})Z_{ij}+o_p(1)\notag\\
    =&\frac{1}{n^{3/2}}\sum_{i=1}^n\left(\frac{\sum_{j\neq i}Q_{ij}(1-Q_{ij})Z_{ij}}{\sum_{j \neq i} Q_{ij}(1-Q_{ij})}\right)\sum_{j \neq i}\left(\frac{\exp(Z_{ij}'\beta_0+A_i(\beta_0))}{1+\exp(Z_{ij}'\beta_0+A_i(\beta_0))}(\hat Z_{ij}-Z_{ij})\right)+o_p(1)\notag.
\end{align}
The sum of the second and third terms can be written as
\begin{align}
    =&\frac{1}{n^{3/2}}\sum_{i=1}^n\sum_{j \neq i}M_{ij}(\hat Z_{ij}-Z_{ij})+o_p(1),\label{2nd3rdterm}
\end{align}
where $M_{ij}=\left(\frac{\sum_{j\neq i}Q_{ij}(1-Q_{ij})Z_{ij}}{\sum_{j \neq i} Q_{ij}(1-Q_{ij})}\right)\frac{\exp(Z_{ij}'\beta_0+A_i(\beta_0))}{1+\exp(Z_{ij}'\beta_0+A_i(\beta_0))}I_d+\left[Q_{ij}(1-Q_{ij})Z_{ij}\beta_0'+Q_{ij}I_d\right]$. \\
\indent Define $\zeta_{ij}=(W_{ij}', G_{ji},\frac{1}{n-2}\sum_{k\neq i,j} G_{jk})'$. As defined in Section 3, $\hat Z_{ij}:=(W_{ij}', \hat q_{ji},  \frac{1}{n-2}\sum_{k\neq i,j} \hat q_{jk})'$. I will show that
\begin{align}
    \frac{1}{n^{3/2}}\sum_{i=1}^n\sum_{j \neq i}M_{ij}(\hat q_{ij}-G_{ij})=0,\label{replace}
\end{align}
so that 
\begin{align*}
    \frac{1}{n^{3/2}}\sum_{i=1}^n\sum_{j \neq i}M_{ij}(\hat Z_{ij}-\zeta_{ij})=0,
\end{align*}
and hence I can replace $\hat Z_{ij}$ in (\ref{2nd3rdterm}) with $\zeta_{ij}$. To see why (\ref{replace}) holds, observe that
\begin{align*}
    \frac{1}{n^{3/2}}\sum_{i=1}^n\sum_{j \neq i}M_{ij}\left(\frac{\sum_{k=1}^n\sum_{l\neq k}G_{ij}\mathbf{1}\{W_{k,l}=W_{ij}\}}{\sum_{k=1}^n\sum_{l\neq k}\mathbf{1}\{W_{k,l}=W_{ij}\}}-G_{ij}\right)=0,
\end{align*}
and the claim follows. Define $V_{i}=\frac{1}{n}\sum_{j \neq i}M_{ij}(\zeta_{ij}-Z_{ij})$. Then (\ref{2nd3rdterm}) can be written as
\begin{align*}
    \frac{1}{\sqrt{n}}\sum_{i=1}^nV_i+o_p(1).
\end{align*}
where $\left\{V_i\right\}_{i=1}^n$ is independently distributed, conditional on $X,\sigma$. \\
\\
\textit{Step 3. Demonstration of Asymptotic Normality of the second and third term in (\ref{2ndorderexpantion}).}\\
\indent To apply CLT, I need to check the Lindeberg condition. Take any vector $a\in \mathbb{R}^d$, the conditional mean of $\frac{1}{\sqrt{n}}a'V_i$ is 0
\begin{align*}
    \mathbb{E}\left[\frac{1}{\sqrt{n}}a'V_i\bigg|X,\sigma\right]=0,
\end{align*}
The conditional variance of $\frac{1}{\sqrt{n}}\sum_i a'V_i$ given $X,\sigma$ is
\begin{align*}
    Var\left(\frac{1}{\sqrt{n}}\sum_i a'V_i\Bigg|X, \sigma\right)=\frac{1}{n}\sum_i\mathbb{E}\left[(a'V_i)^2|X, \sigma\right]:=\Omega_n.
\end{align*}
By compact support, Condition (\ref{bdd}), and Lemma \ref{qhat}, 
\begin{align*}
    \frac{\max_{i}|\frac{1}{\sqrt{n}}a'V_i|}{\sqrt{\Omega_n}}\xrightarrow{p}0.
\end{align*}
To check the Lindeberg condition, note that for any $\epsilon>0$
\begin{align*}
    &\frac{1}{\Omega_n}\sum_{i}\mathbb{E}\left[\frac{1}{n}(a'V_i)^2\mathbf{1}\left\{\frac{|\frac{1}{\sqrt{n}}a'V_i|}{\sqrt{\Omega_n}}>\epsilon\right\}\bigg| X, \sigma\right]\\
    \leqslant&\frac{1}{\Omega_n}\sum_{i}\mathbb{E}\left[\frac{1}{n}(a'V_i)^2\mathbf{1}\left\{\frac{\max_i|\frac{1}{\sqrt{n}}a'V_i|}{\sqrt{\Omega_n}}>\epsilon\right\}\bigg| X, \sigma\right]\\
    \leqslant&\frac{1}{\Omega_n}\sum_{i}\mathbb{E}\left[\frac{1}{n}(a'V_i)^2\bigg| X, \sigma\right]=1.
\end{align*}
By the dominated convergence theorem, the Lindeberg condition follows, i.e. for any $\epsilon>0$
\begin{align*}
    \frac{1}{\Omega_n}\sum_{i}\mathbb{E}\left[\frac{1}{n}(a'V_i)^2\mathbf{1}\left\{\frac{|\frac{1}{\sqrt{n}}a'V_i|}{\sqrt{\Omega_n}}>\epsilon\right\}\bigg| X, \sigma\right]\xrightarrow{p}0.
\end{align*}
By Lindeberg-Feller CLT, for any $a\in \mathbb{R}^d$, 
\begin{align*}
    \Omega_n^{-1/2}\frac{1}{\sqrt{n}}\sum_ia'V_i\xrightarrow{d}N(0,1)
\end{align*}
\indent Combining with the result in \textit{Step 1}, this yields the desired conclusion that for any $a\in\mathbb{R}^d$,
\begin{align*}
    \frac{\sqrt{n}a'(\hat \beta-\beta_0)}{\|a\|^{-1/2}(a'\mathcal{I}_0^{-1}\Omega_n\mathcal{I}_0^{-1}a)^{1/2}}\xrightarrow{d}N(0,1).
\end{align*}    
\end{proof}

\bibliographystyle{econ}
\bibliography{summer}

\end{document}